%%
%% Copyright 2007-2020 Elsevier Ltd
%%
%% This is a modified file from the 'Elsarticle Bundle', which is available under the conditions of the LaTeX Project Public
%% License, either version 1.2 of this license or any later version.
%% ---------------------------------------------

\documentclass[11pt,authoryear]{elsarticle}
%% For including figures, graphicx.sty has been loaded in
%% elsarticle.cls. If you prefer to use the old commands
%% please give \usepackage{epsfig}

\usepackage[left=2.5cm, right=2.5cm, top=3.5cm, bottom=3cm, footskip=0.5cm]{geometry}
\usepackage{amsmath, amsfonts, amsthm, amssymb}
\usepackage{verbatim}
\usepackage{hyperref}
\usepackage{color}
\usepackage{graphicx}
\usepackage{enumerate}
\usepackage{wrapfig}
\usepackage{eurosym}
\usepackage{subcaption}
\usepackage{ulem}
\usepackage{cases} 
\usepackage{fancyhdr}
\usepackage{multirow}
\usepackage{mathrsfs}

\numberwithin{equation}{section}
\linespread{1.1}

\newtheorem{theorem}{Theorem}[section]
\newtheorem{corollary}[theorem]{Corollary}

\theoremstyle{definition}
\newtheorem{definition}[theorem]{Definition}
\newtheorem{remark}[theorem]{Remark}
\newtheorem{assumption}[theorem]{Assumption}
\newtheorem{sassumption}[theorem]{Standing Assumption}

\numberwithin{equation}{section}
\usepackage[none]{hyphenat}

\newcommand{\RR}{\mathbb{R}}

\newcommand{\Ir}{\mathbf{I}}
\newcommand{\NN}{\mathbb{N}}
\newcommand{\Ff}{\mathcal{F}}

\newcommand{\Ttt}{\mathcal{T}}
\newcommand{\ft}{\mathfrak{t}}

\newcommand{\cN}{\mathcal{N}}
\newcommand{\cLN}{\mathcal{LN}}
\newcommand{\cW}{\mathcal{W}}

\newcommand{\cE}{\mathcal{E}}

\newcommand{\cZ}{\mathcal{Z}}

\newcommand{\cC}{\mathcal{C}}
\newcommand{\cG}{\mathcal{G}}
\newcommand{\cH}{\mathcal{H}}
\newcommand{\cA}{\mathcal{A}}

\newcommand{\cL}{\mathcal{L}}

\newcommand{\cU}{\mathcal{U}}
\newcommand{\EE}{\mathbb{E}}
\newcommand{\dr}{\mathrm{d}}
\newcommand{\OneN}{\{1,\hdots,N\}}

\newcommand{\VV}{\mathbb{V}}
\newcommand{\bOne}{\mathbf{1}}

\newcommand{\pf}{\mathfrak{p}}
\newcommand{\ff}{\mathfrak{f}}
\newcommand{\cc}{\mathfrak{c}}

\newcommand{\Vv}{{V}}

%%% LONG EQUATIONS

\newcommand{\titre}{Modeling the impact of Climate transition on real estate prices}

%%%%%%%%%%%%%%%%%%%%%%%%%%%%%%%%%%%%%%%%%%%%%%%
\usepackage{fancyhdr}
\pagestyle{fancy}
\fancyhf{}
\rhead{\titre}
\lhead{}
\rfoot{Page \thepage}

\begin{document}

\begin{frontmatter}

\title{\titre}
\date{\today}

%\author[4]{G\'{e}raldine Bouveret}
%\author[1]{Jean-Fran\c{c}ois Chassagneux}
%\author[3]{Smail Ibbou}
%\author[2,5]{Antoine Jacquier}
\author[1,2,3]{Lionel Sopgoui}

\address[1]{ Laboratoire de Probabilités, Statistique et Modélisation (LPSM), Université Paris Cité}
\address[2]{Department of Mathematics, Imperial College London}
\address[3]{Validation des modèles, Direction des risques, Groupe BPCE}
%\address[4]{Climate Risks Research Department, Rimm Sustainability Ltd}
%\address[5]{Alan Turing Institute}

\journal{arXiv}

\begin{abstract}
In this work, we propose a model to quantify the impact of the climate transition on a property in the housing market. We begin by noting that property is an asset in an economy. That economy is organized in sectors, driven by its productivity which is a multidimensional Ornstein-Uhlenbeck process, while the climate transition is declined thanks to the carbon price, a continuous deterministic process. We then extend the \textit{sales comparison approach} and the \textit{income approach} to evaluate an energy inefficient real estate asset. We obtain its value as the difference between the price of an equivalent efficient building following an exponential Ornstein-Uhlenbeck as well as the actualized renovation costs and the actualized sum of the future additional energy costs (before and after the renovation date). These costs are due to the inefficiency of the building, before an optimal renovation date which depends on the carbon price process. Moreover, since the renovation increases the efficiency of the building, which is random, the future additional energy costs become smaller and even zero if the optimal energy efficiency is reached. Finally, we carry out simulations based on the French economy and the house price index of France. The findings support the conclusion that the order of magnitude of the depreciation obtained by our model is the same as the empirical observations.

\end{abstract}

\begin{keyword}
%% keywords here, in the form: keyword \sep keyword
Stochastic modeling \sep Transition risk \sep Carbon price \sep Housing valuation \sep Real estate \sep Energy efficiency
\end{keyword}

\end{frontmatter}

\footnotesize This research is part of the PhD thesis in Mathematical Finance of Lionel Sopgoui whose works are funded by a CIFRE grant from BPCE S.A. The opinions expressed in this research are those of the authors and are not intended to reflect the opinions or official positions of BPCE S.A.

\footnotesize We would like to thank Jean-Fran\c{c}ois Chassagneux, Antoine Jacquier, Smail Ibbou, and Géraldine Bouveret for helpful comments on an earlier version of this work.

%\section*{Updated abstract}

\normalsize
%\tableofcontents
%%%%%%%%%%%%%%%%%%%%%%%%%%%%%%%%%%%%%%%
\newpage
\paragraph{Notations}
\begin{itemize}
    \item $\NN$ is the set of non-negative integers, $\NN^{*} := \NN\setminus\{0\}$, and $\mathbb{Z}$ is the set of integers.
    \item $\RR^d$ denotes the $d$-dimensional Euclidean space, $\RR_{+}$ is the set of non-negative real numbers, $\RR_{+}^{*} := \RR_{+}\setminus\{0\}$.
    \item $\bOne := (1,\hdots,1) \in\RR^{I}$.
    \item $\RR^{n\times d}$ is the set of real-valued $n\times d$ matrices ($\RR^{n\times 1} = \RR^{n}$), $\Ir_n$ is the identity $n\times n$ matrix.
    \item $x^i$ denotes the $i$-th component of the vector $x \in \RR^d$. For all $A := (A^{ij})_{1\leq i,j\leq n}\in\RR^{n\times n}$, we denote by~$A^\top := (A^{ji})_{1\leq i,j\leq n}\in\RR^{n\times n}$ the transpose matrix, and $\lambda(A)$ denotes the spectrum of $A$.
    \item For all $x,y\in\RR^d$, we denote the scalar product $x^\top y$, the Euclidean norm~$ | x | := \sqrt{x^\top x}$ and for a matrix~$M\in\RR^{d\times d}$, we denote
\begin{equation*}
    | M |:= \sup_{a\in\RR^d, |a| \leq 1}   |Ma|  \label{ct-eq:norm}.
\end{equation*}
\item $(\Omega, \mathcal{H}, \mathbb{P})$ is a complete probability space.

\item  For $p \in [1,\infty]$, ${E}$ is a finite dimensional Euclidian vector space and for a $\sigma$-field $\cH$, $\cL^p(\cH,{E})$, denotes the set of $\cH$-measurable random variable $X$ with values in ${E}$ such that $\Vert X \Vert_{p} := \left(\EE\left[ |X|^p\right] \right)^{\frac1p}<\infty$ for $p < \infty$ and for $p = \infty$, $\Vert X \Vert_{\infty} := \mathrm{esssup} |X(\omega)| < \infty$. 
\item For a filtration $\mathbb{G}$, $p \in [1,+\infty]$ and $I\in \NN^*$, $\mathscr{L}^p_{+}(\mathbb{G},(0,\infty)^I)$ is the set of continuous-time processes that are $\mathbb{G}$-adapted valued in $(0,\infty)^I$ and  which satisfy
\begin{equation*}
    \lVert X_t \rVert_p < \infty \text{ for all } t\in\RR_+.
\end{equation*}
\item If $X$ and $Y$ are two random variables $\RR^d$-valued, for $x\in\RR^d$, we note $Y|X=x$ the conditional distribution of $Y$ given $X=x$, and $Y|\Ff$ the conditional distribution of 
$Y$ given the filtration~$\Ff$.
\item If $f: \RR\to\RR, t\mapsto f(t)$ is a differentiable function, we note $\dot f$ its first derivative.
\end{itemize}

\section*{Introduction}

Real estate is a large part of capital stock and a large component of economic wealth. For example, Figure~\ref{ct-fig:housing_share} from~\cite{oecd2024realestate} shows that dwellings represent between 10\% and 40\% of annual investment and between 15\% and 25\% of annual households disposable income. This explains the need to price real estate assets. Determining the market value of a building may be essential for accounting needs (portfolio construction or management, asset price evolution, etc.), for making decisions (build, buy, or sell a property), or for applying for loans (collateral). In his dissertation, \cite{schulz2003valuation} presents three approaches for real estate valuation. The \textit{sales comparison approach} consists in using the transaction prices of highly comparable and recently sold properties to determine the market value of a given building. The \textit{income approach} where the value of a property is the discounted sum of the (imputed) rent less all operating costs over its residual lifetime.
The \textit{cost approach} which consists in assuming that the value of a building is equal to
the amount it would cost to buy the land (with identical characteristics) and to build a new building (with the same characteristics) on it. 

The literature suggests several models for each approach. A real estate market model is proposed by~\cite{fabozzi2012pricing} to price real estate derivatives and then used for the calculation of the LGD by~\cite{frontczak2015modeling}. These works model the price of a property as an exponential Ornstein-Uhlenbeck process while \cite{moodys2022} use a geometric Brownian motion. Laspeyres-Paasche housing index (see~\cite{gravelle2004microeconomics}) is estimated based on the transactions price of buildings sold in the previous period, therefore, valuate a property using the housing price index is an example \textit{sales comparison approach}. For the \textit{income approach}, the expected net (imputed) income is the tricky quantity to determine. \cite{schulz2003valuation} takes into account the relative maintenance and repair expenditures, the depreciating rate, the real estate tax rate, the real interest rate, the inflation of the rent (or imputed rent -- the rental price an individual would pay for an asset they own --), etc. These two approaches are used in this work.

Climate change has a deep impact on human societies and their environments. One of the components of climate risk is transition risk which relates to the potential economic and
financial losses associated with the process of adjusting towards a low-carbon economy. Transition risk is becoming increasingly
important in all parts of the economy, in particular in real estate. Residential or commercial buildings are one of the biggest greenhouse gases (GHG) emitters. Using Eurostat GHG emissions data (see~\cite{eurostat2022}), we find in Figure~\ref{ct-fig:part_GHG_emissions_on_households} that, between 2008 and 2021, around 45\% of the total household emissions came from residential heating and cooling. And as shown in Figure~\ref{ct-fig:part_GHG_emissions_on_total}, they represent around 9\% of the aggregate emissions, in the EU economic region. This is why renovation of buildings constitutes a central challenge in climate policies. The Energy Performance of Buildings Directive (EPBD) from~\cite{directive200291eu} and~\cite{eu2022epbd}, introduced in 2002 by the European Commission and revised in 2010 and later, is a key instrument to increase the energy performance of buildings across the EU. Similarly to the carbon price, it is a way to implement climate transition. It consists in ranking buildings in terms of their energy efficiency (EE)\footnote{in ton of CO$_2$ emissions per square meter per year or kilowatt hour per square meter per year} by using letters from A to G (where A is the most efficient while G is the less). \cite{aydin2020capitalization} found that EE is capitalized quite precisely into home prices in the Dutch housing market. \cite{de2016price} found in a study on 1507 homes in Spain that dweelings labelled A, B or C are valued at between 5.4\% and 9.8\% higher price compared to D, E, F or G rated home. \cite{franke2019energy} also highlight that, in the rental decision-making, EE achieves a high importance score similar to that of rent, price and location respectively.

 There is a great deal of statistical work on the effect of climate change on real estate, but very little modeling. This last point interests us here. To introduce the climate transition, we draw inspiration from~\cite{ter2021german} who write the price difference per square meter between two properties with different energy efficiency as the sum of the discounted value of (expected) energy cost differences. We will enhance their work by additionally considering the renovation costs to improve energy efficiency and the optimal renovation date which will depend on the trajectory of energy prices (written as a function of the carbon price). Initially, the owner of the building incurs additional energy costs as a result of the inefficiency of their property. Then, they may decide to spend money on renovations to improve their building energy efficiency. After the renovation, either the energy efficiency level is reached so that they no longer incur additional energy costs. Or it is not reached, and they will continue suffering energy costs, but smaller than before.

The remainder of the present work is organized as follows. Because the real estate industry is a sector of the economy, in Section~\ref{ct-sec:cont time}, we will describe the economy concerned. We model, in Subsection~\ref{ct-sec:col property}, a \textit{building} using both an exponential Ornstein-Uhlenbeck (O.U.) and a discounted sum of an income over its residual lifetime. Section~\ref{sec:estim calib} is dedicated to estimations and simulations, while Section~\ref{sec:discussion} focuses on discussion.

\section{Standing assumptions}\label{ct-sec:cont time}

We assume that real estate assets are part of an economy divided into sectors $I\in\NN^*$, driven by dynamic and
stochastic productivity, and subject to climate transition modeled by a dynamic and
deterministic carbon price. Inspired by \cite{bouveret2023propagation}[Section 2] which models, in discrete time, a multisectoral closed economy subject to the carbon price. We introduce the following standing assumption which describes the productivity, which is considered to have stationary Ornstein-Uhlenbeck dynamics.
\begin{sassumption}\label{ct-sassump:OU}
    We define the $\RR^I$-valued process~$\mathcal{A}$ which evolves  according to 
\begin{equation}\label{ct-eq:VAR}
     \left\{
     \begin{array}{rl}
      \dr\cZ_t &= -\Gamma\cZ_t \dr t + \Sigma \dr B_t^{\cZ}\\
     % (\mu + \varsigma \cZ_t) &= \mu + \varsigma \cZ_t\\
     \dr \cA_t &= \left(\mu + \varsigma \cZ_t\right) \dr t
     \end{array}\quad\textrm{for all } t\in\RR_+,
     \right.
     \end{equation}
     where $(B_t^{\cZ})_{t\in\RR^*}$ is a $I$-dimensional Brownian motion, and where the constants $\mu, \mathcal{A}_0 \in \RR^I$, the matrices~$\Gamma,\Sigma \in\RR^{I\times I}$, $\cZ_0 \sim \cN\left(0,  \Sigma \Sigma^\top \right)$, and $0 < \varsigma \le 1$ is a fixed parameter controlling noise intensity: it will be used later to obtain a tractable proxy of the firm value. Moreover, $\Sigma$ is a positive definite matrix and $-\Gamma$ is a Hurwitz matrix i.e. its eigenvalues have strictly negative real parts.
\end{sassumption}

We also introduce the following filtration $\mathbb{G}:=(\mathcal{G}_t)_{t\in\RR^*}$ with $\cG_0 := \sigma(\cZ_0)$ and for $t>0$, $\mathcal{G}_t := \sigma\left(\left\{\cZ_0, B_s^{\cZ}: s\leq t\right\}\right)$. 

\begin{remark}[O.U. process]\label{ct-rem:VAR1}
We have the following results on O.U. that we will use later on:
    \begin{enumerate} 
        \item According to~\cite{gobet2016perturbation}[Proposition 1], if one assumes that $\cZ_0$ and $B^{\cZ}$ are independent and $\cZ_0$ is square integrable, then, there exists a unique square integrable solution to the $I$-dimentional Ornstein-Uhlenbeck process~$\cZ$ satisfying $\dr\cZ_t = -\Gamma\cZ_t \dr t + \Sigma \dr B_t^{\cZ}$, represented as
        \begin{equation*}
            \cZ_t = e^{-\Gamma t} \left(\cZ_0 + \int_{0}^{t} e^{\Gamma u} \Sigma \dr B_u^{\cZ} \right),\quad\textrm{for all } t\in\RR_+.
        \end{equation*}
         Additionally, for any $t,h \geq 0$, the distribution of $\cZ_{t+h}$ conditional on $\cG_t$ is Gaussian~$\cN\left(M^{\cZ,h}_{t}, \Sigma^{\cZ,h}_{t}\right)$, with the mean vector
        \begin{equation}
            M^{\cZ,h}_{t} := \EE[\cZ_{t+h}|\cG_t] = e^{-\Gamma h} \cZ_t,
        \end{equation}
     and the covariance matrix
     \begin{equation}
         \Sigma^{\cZ,h}_{t} := \VV[\cZ_{t+h}|\cG_t] = \int_{0}^{h} e^{-\Gamma u} \Sigma \Sigma^\top e^{-\Gamma^\top u} \dr u.
     \end{equation}
     \item Since $-\Gamma$ is a Hurwitz matrix, then if we note $\lambda_\Gamma :=\max_{\lambda\in\lambda(\Gamma)} Re(\lambda)$, there exists $c_\Gamma>0$ so that $\lVert e^{-\Gamma t}\rVert < c_\Gamma e^{-\lambda_\Gamma t}$ for all~$t\geq 0$. Therefore, according to~\cite{gobet2016perturbation}[Proposition 2], $\cZ$ has a unique stationary distribution which is Gaussian with mean $0$ and covariance $\int_{0}^{+\infty} e^{-\Gamma u} \Sigma \Sigma^\top e^{-\Gamma^\top u}\dr u$. 
    \item We can show in~\ref{ct-app:OU} that for any $t, h \geq 0$, we have
        \begin{equation*}
            \cA_{t+h} = \cA_{t} + \int_{t}^{t+h} (\mu + \varsigma \cZ_s) \dr s = \mu h + \varsigma \int_{t}^{t+h} \cZ_s \dr s,
        \end{equation*}
        and conditionally on $\cG_t$, $\cA_{t+h}$ has an $I$-dimensional normal distribution with the mean vector
        \begin{equation}
            M^{\cA,h}_{t} :=  \mu h + \varsigma\Upsilon_{h}\cZ_t +  \cA_t,
        \end{equation}
    with \begin{equation}\label{ct-eq:Upsilon}
    \Upsilon_{h} := \int_{0}^{h} e^{-\Gamma s} \dr s = \Gamma^{-1}(\Ir_I-e^{-\Gamma h}),
\end{equation}
     and the covariance matrix
     \begin{equation}\label{ct-eq:Ma_ht}
         \Sigma^{\cA,h}_{t} := \varsigma^2 \Gamma^{-1} \left(\int_{0}^{h} \left(e^{-\Gamma u} - \Ir_I \right) \Sigma\Sigma^\top \left(e^{-\Gamma u} - \Ir_I \right) \dr u \right) (\Gamma^{-1})^\top = \varsigma^2\int_{0}^{h} \Upsilon_{u} \Sigma\Sigma^\top \Upsilon_{u}^\top \dr u .
     \end{equation}
    \item For later use, we define 
        \begin{align}\label{ct-de A circ}
            \cA^\circ_t := \mathcal{A}_t - \mathcal{A}_0,
        \end{align}
        and observe that $(\cA^\circ_t,\cZ_t)_{t \ge 0}$ is a Markov process.
    \end{enumerate}
\end{remark}

For the whole economy, we introduce a deterministic and exogenous carbon price in euro/dollar per ton.
It allows us to model the impact of the transition pathways on the whole economy. We will note $\delta$ the complete carbon price process. We shall then assume the following setting. 
\begin{sassumption}\label{ct-sassc:price}
We introduce the carbon price process.

Let $0 \le t_\circ < t_\star$ be given. 
The sequence $\delta$ satisfies
\begin{itemize}
    \item for $t \in [0;t_\circ]$, $\delta_t = \delta_0\in (\RR_+)^I$, namely the carbon price is constant;
    \item for $t \in (t_\circ,t_\star)$, $\delta_t \in (\RR_+)^I$, the carbon price may evolve;
    \item for $t \ge t_\star$, $\delta_t = \delta_{t_\star} \in (\RR_+)^I$, namely the carbon price is constant.
\end{itemize}
\noindent We assume moreover that $t\mapsto \delta_t$ is $\mathcal{C}^1(\RR_+, \RR_+)$.
\end{sassumption}

\paragraph{An example of carbon price process}
We assume the regulator fixes $t_\circ\geq 0$ when the transition starts and the transition horizon time~$t_\star>t_\circ$, the carbon price at the beginning of the transition~$P_{carbon} > 0$, at the end of the transition~$\delta_{t_\star} > P_{carbon}$, and the annual growth rate~$\eta_\delta > 0$. 
Then, for all~$t\geq 0$,
 \begin{equation}\label{ct-eq:carbon price}
    \delta_{t}= \left\{
    \begin{array}{ll}
    \displaystyle P_{carbon}, & \mbox{if } t \leq t_\circ,\\
    \displaystyle P_{carbon} e^{\eta_\delta(t-t_\circ)}, & \mbox{if } t\in(t_\circ,t_\star],\\
    \displaystyle \delta_{t_\star} = P_{carbon} e^{\eta_\delta(t_\star-t_\circ)}, &  \mbox{otherwise}.
    \end{array}
\right.
\end{equation} 
 In the example above that will be used in the rest of this work, we assume that the carbon price increases. However, there are several scenarios that could be considered, including a carbon price that would increase until a certain year before leveling off or even decreasing. We also assume an unique carbon price for the entire economy whereas we could proceed differently. For example, to prevent social upheaval, carbon pricing could be adjusted -- intensified for industrial production while remaining stable or even eliminated for households. The framework can be adapted to various sectors as well as scenarios.\\

\section{Valuation of a property under climate transition}\label{ct-sec:col property}
 The problem here is to model the real estate market in the presence of the climate transition risk. We will use two of the three approaches mentioned in the introduction: the \textit{sales comparison approach} and the \textit{income approach}. For efficient buildings, we use the first approach. Precisely, we will write the price of a property as the product of its surface area, its initial price, and the house price index (the latter described by an exponential Ornstein–Uhlenbeck dynamics). For the inefficient ones, we adopt the second approach. Explicitly, in addition to all the costs involved in owning the property, we will consider the energy costs due to energy inefficiency and the potential renovation costs.\\

According to~\cite{ter2021german}, in discrete time, the price difference per square meter between two properties, one of which with the highest EE label as the reference point (A+), should be solely explained by the sum of the discounted value of (expected) energy cost (noted~$EC$) differences:
\begin{equation}\label{ct-eq:tergerman}
    P^j_t-P^{A+}_t = -\sum_{h=1}^{T} \frac{EC_{t+h}^j-EC^{A+}_{t+h}}{(1+r)^h}.
\end{equation}
This equation takes the perspective of a potential buyer who weighs the
options between buying the efficient property with score~$A+$ at a higher price and enjoying the lower energy costs, and buying the inefficient one with score~$j$ at a discount that reflects the expected increased energy costs at time $T$. Moreover, energy costs $EC^j_t$ are the simple product of the expected energy price and a constant factor measuring the EE.\\

We extend the~\cite{ter2021german} work here by assuming the following.
\begin{enumerate}
    \item In the absence of climate transition, the housing price follows an exponential Ornstein-Uhlenbeck process.
    \item Each dwelling consumes a given quantity of energy per square meter, which is used to determine its energy efficiency, noted $\alpha$ and expressed in kilowatts per square meter (KWh/m2).
    \item As a consequence, the dwelling price is depreciated (or appreciated) by the actualized sum of future energy costs. 
    \item Once a certain level of energy efficiency $\bar\alpha$ is reached, the market is insensitive to this factor. 
    \item The energy price is a deterministic function~$\ff$ of two variables, the first variable is the carbon price and the second is the source of energy.
    \item During the life of the property, if $\alpha > \bar\alpha$ (i.e. it is not efficient), the owner may undertake renovations which move the energy efficiency from~$\alpha$ to~$\alpha^\star$, and whose cost per square meter is a function~$\cc$ of its energy efficiencies.
    \item The density of the probability that the owner spends $\cc(\alpha, \alpha^\star)$, to move the efficiency of its property from $\alpha$ to $\alpha^\star$ is $g: \RR_+^2 \to \RR_+$.
    \item The date of renovation~$\ft$ of a dwelling is unknown, but is to be optimized.
    \item After renovations, the price of the building becomes insensitive to energy costs.
\end{enumerate}

\begin{assumption}[Housing price without climate transition]\label{ass:housing market} We consider here two ways to model a housing price:
    \begin{enumerate}
        \item The market value of the building at $t\geq 0$, is given by
    \begin{align}\label{ct-eq:housing EOU}
        C_{t} &:= R C_{0} e^{K_t} ,
    \end{align}
    where
\begin{subequations}
\begin{align}
\displaystyle\dr K_t &= \left(\dot{\chi}_t+ \nu(\chi_t - K_t) \right) \dr t + \overline{\sigma} \dr\overline{B}_t, \label{ct-eq:coldynamics2}\\
\dr \overline{B}_t &= \rho^\top \dr B^{\cZ}_t + \sqrt{1-\lVert\rho\rVert^2} \dr\overline\cW_t,
\label{ct-eq:coldynamics3}
\end{align}
\end{subequations}
where $(\overline\cW_t)_{t\in\RR_+}$ is a standard Brownian motion independent to~$B^{\cZ}$ introduced in Standing Assumption~\ref{ct-sassump:OU} and driving the productivity of the economy. Moreover, $C_{0}$, $r$, $R, \overline{\sigma} > 0$, $\rho\in\RR_+^I$, and $\chi\in\mathcal{C}^1(\RR_+, \RR_+)$. 
    \noindent We introduce the following filtration $\mathbb{U}:=(\cU_t)_{t\in\RR^*}$ with for $t\geq 0$, $\cU_t := \sigma\left(\left\{ \overline{W}_s, B^{\cZ}_s: s\leq t\right\}\right)$.  
\item Owning a building allows to generate an ( the (imputed) income cash-flow process~$(D_t)_{t\geq 0}$ which is continuous and $\mathbb{U}$-adapted, therefore for all~$t\geq 0$, an another way to write the price $C_{t}$ of the building is
\begin{align}\label{ct-eq:housing DCF}
    C_{t} &= R\EE\left[\int_{0}^{+\infty} e^{- \bar r s} D_{t+s}\dr s \middle|\cU_t\right],
\end{align}
with $\bar r>0$.
    \end{enumerate}
\end{assumption}
The previous assumption yields the following observations:
\begin{itemize}
    \item $C_{0}$ in~\eqref{ct-eq:housing EOU}, is the value per square meter of the building at time~$0$ (in euros/m$^2$ for example) while $K$ is the log of the housing price index whose dynamics is~\eqref{ct-eq:coldynamics2} inspired by~\cite{frontczak2015modeling, fabozzi2012pricing}. It characterizes the returns of the housing market which are correlated and fluctuates over time around a long-term average~$\chi$.
    \begin{figure}[!ht]
    \centering
        \includegraphics[width=0.88\textwidth]{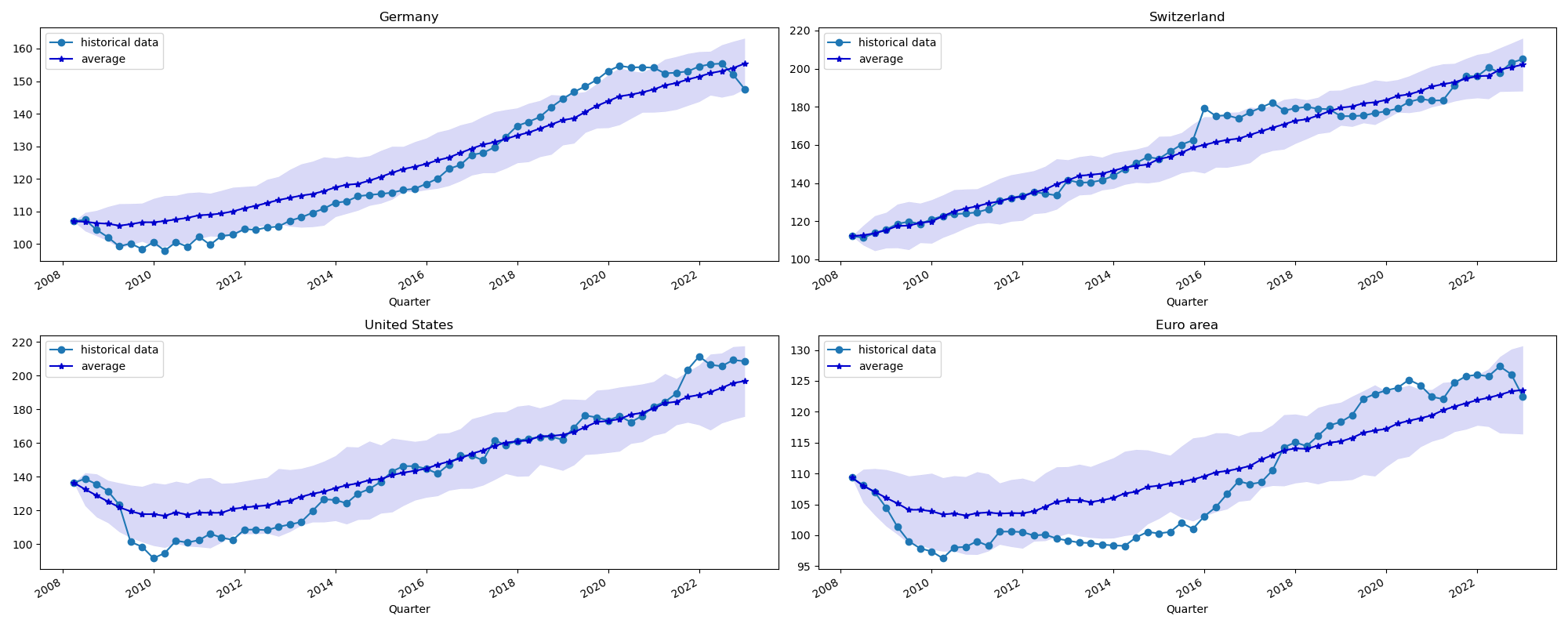}
        \caption{Mean and 95\% confidence interval of the housing price index of four countries}
        \label{ct-fig:housing_price}
    \end{figure}
    In Figure~\ref{ct-fig:housing_price}, we visualize historical housing price index of four countries calibrated on~\eqref{ct-eq:coldynamics2} with the average long-term level~$\chi$ linear in time. This means that~$\chi_t\varrho t + \vartheta$, for all~$t\in\RR_+$, is a good choice.
    \item Equation~\eqref{ct-eq:housing DCF} refers to the fact that owning a building leads to additional income such as depreciation, maintenance costs, opportunity costs of capital (rent received or saved), or flow of various taxes (see~\cite{schulz2003valuation}[Chapter 2]). Precisely, if we denote $P_s := \EE[D_{s}|\mathcal{U}_t]$, for~$s\geq t$, we could write, 
    \begin{align}
        \dot{P_s} = (\pi-\gamma-\upsilon-m) P_s,
    \end{align}
    where $m$ is related to maintenance and repair expenditures, $\gamma$ is related to the depreciating rate, $\upsilon$ is related the real estate tax rate, $\pi$ is related the (imputed) rent rate.
    \item Moreover, recall that $B^{\cZ}$ is the noise of the productivity process of the economy, the definition of $\overline{B}$ in \eqref{ct-eq:coldynamics3} allows then to link the real estate market and the productivity (to verify this, one could look at the supply and use tables of~\cite{insee2023sut}\footnote{The French National Institute of Statistics and Economic Studies} to see the links between real estate activities and other economic sectors).
\end{itemize}
\noindent We now turn our attention to the law of the solution of~\eqref{ct-eq:coldynamics2}. We have for $0\leq t \leq T$,
\begin{equation}
    \begin{split}
    K_{T} &= \chi_{T} - \left(\chi_0 - K_0 \right)e^{-\nu T} + \overline{\sigma}\int_{0}^{T} e^{-\nu(T-s)} \dr \overline{B}_s^{n}\\
    &= \chi_{T} -  \left(\chi_t - K_t\right)e^{-\nu (T-t)} + \overline{\sigma}\rho^\top \int_{t}^{T} e^{-\nu(T-s)} \dr B_s^{\cZ}  + \overline{\sigma}\sqrt{1-\lVert\rho\rVert^2}  \int_{t}^{T} e^{-\nu(T-s)} \dr \overline \cW_s.
    \end{split}
\end{equation}
We therefore get the distribution of $K_{T}$ conditional on $\cG_t$ (and not conditional on $K_t$) of is Gaussian, 
\begin{small}
\begin{align}\label{ct-eq: cond law Y_T}
        &\cN\left(\chi_{T} - \left(\chi_0-K_0 \right)e^{-\nu T} + \overline{\sigma}\rho \int_{0}^{t} e^{-\nu(T-s)} \dr B^{\cZ}_s\right.\\
        &\qquad\qquad\left.\frac{(\overline{\sigma}\lVert\rho\rVert)^2)}{2\nu}\left(1-e^{-2\nu (T-t)}\right) + \frac{(\overline{\sigma})^2 (1-\lVert\rho\rVert^2)}{2\nu}\left(1-e^{-2\nu T}\right)\right).
\end{align}
\end{small}

\noindent We can rewrite~\eqref{ct-eq:housing EOU} is, for $0\leq t$, 
\begin{equation*}
        C_{t} = R C_0 \exp{\left( \chi_{t}- \left(\chi_0 - K_0\right)e^{-\nu t} + \overline{\sigma}\int_{0}^{t} e^{-\nu(t-s)} \dr \overline{B}_s^{n}\right)}.
\end{equation*}

The following corollary gives the conditional distribution of the collateral. Its proof is straightforward and is directly derived from~\eqref{ct-eq: cond law Y_T}.
\begin{corollary}\label{ct-cor: cond law coll 2}
    For $0\leq t \leq T$, the law of
    $C_{t+T} = R C_0\exp{(K_{t+T})}$ conditional on $\cG_t$ is log-Normal $\cLN(m_{t, T}, v_{t, T})$ with
    \begin{equation}\label{ct-eq:mean col}
        m_{t, T} := \log{(R C_0)} + \chi_{t+T} - \left(\chi_0-K_0 \right)e^{-\nu (t+T)} + \overline{\sigma}\rho^\top \int_{0}^{t} e^{-\nu(t+T-s)} \dr B^{\cZ}_s,
    \end{equation}
    and
    \begin{equation}\label{ct-eq:var col}
        v_{t, T} :=  \frac{(\overline{\sigma}\lVert\rho\rVert)^2)}{2\nu}\left(1-e^{-2\nu T}\right) + \frac{(\overline{\sigma})^2 (1-\lVert\rho\rVert^2)}{2\nu}\left(1-e^{-2\nu (t+T)}\right).
    \end{equation}
\end{corollary}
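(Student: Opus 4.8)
The plan is to obtain the statement directly from the conditional law of $K$ just established in~\eqref{ct-eq: cond law Y_T}, by two elementary manipulations: replacing the horizon $T$ with $t+T$, and applying the (deterministic, positive) exponential scaling $C^n_{t+T} = RC_0\exp(K_{t+T})$. First I would start from the closed form of the solution of~\eqref{ct-eq:coldynamics2},
\begin{equation*}
    K_{t+T} = \chi_{t+T} - (\chi_0 - K_0)e^{-\nu(t+T)} + \overline{\sigma}\int_0^{t+T} e^{-\nu(t+T-s)}\dr\overline{B}_s,
\end{equation*}
and substitute~\eqref{ct-eq:coldynamics3} to split the last integral into $\overline{\sigma}\rho^\top\int_0^{t+T}e^{-\nu(t+T-s)}\dr B^{\cZ}_s + \overline{\sigma}\sqrt{1-\lVert\rho\rVert^2}\int_0^{t+T}e^{-\nu(t+T-s)}\dr\overline{\cW}_s$.

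The key step is then to decompose each Wiener integral over $[0,t+T]$ at the time $t$ and exploit the independence structure. The piece $\int_0^t e^{-\nu(t+T-s)}\dr B^{\cZ}_s$ is $\cG_t$-measurable, while $\int_t^{t+T}e^{-\nu(t+T-s)}\dr B^{\cZ}_s$ has increments independent of $\cG_t$; and since $\overline{\cW}$ is independent of $B^{\cZ}$ by Standing Assumption~\ref{ct-sassump:OU}, the whole integral $\int_0^{t+T}e^{-\nu(t+T-s)}\dr\overline{\cW}_s$ is independent of $\cG_t$. Hence, conditionally on $\cG_t$, $K_{t+T}$ is an affine function of a centered Gaussian vector (Wiener integrals of deterministic integrands) plus a $\cG_t$-measurable term, so it is Gaussian. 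Its conditional mean is obtained by keeping the deterministic terms and the $\cG_t$-measurable stochastic integral and dropping the two conditionally centered pieces; its conditional variance, by the It\^o isometry applied separately to the two independent contributions, equals $\overline{\sigma}^2\lVert\rho\rVert^2\int_t^{t+T}e^{-2\nu(t+T-s)}\dr s + \overline{\sigma}^2(1-\lVert\rho\rVert^2)\int_0^{t+T}e^{-2\nu(t+T-s)}\dr s$, which evaluates to $v_{t,T}$ in~\eqref{ct-eq:var col}. This is exactly~\eqref{ct-eq: cond law Y_T} read with $T$ replaced by $t+T$, giving conditional mean $m_{t,T}-\log(RC_0)$ for $K_{t+T}$ with $m_{t,T}$ as in~\eqref{ct-eq:mean col}.

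Finally, since $RC_0>0$ is deterministic, $\log C^n_{t+T} = \log(RC_0) + K_{t+T}$ is, conditionally on $\cG_t$, Gaussian with mean $\log(RC_0) + (m_{t,T}-\log(RC_0)) = m_{t,T}$ and variance $v_{t,T}$, which by definition means $C^n_{t+T}\mid\cG_t \sim \cLN(m_{t,T},v_{t,T})$. There is no genuine obstacle here; the only point requiring care is the asymmetry between the two noise sources when conditioning on $\cG_t$ rather than on $\cU_t$: the $B^{\cZ}$-part of the path over $[0,t]$ survives in the conditional mean because it is $\cG_t$-measurable, whereas the $\overline{\cW}$-part contributes to the conditional variance over the whole interval $[0,t+T]$ because $\overline{\cW}$ is not $\mathbb{G}$-adapted, and tracking these two ranges of integration correctly is precisely what produces the two distinct exponential terms in~\eqref{ct-eq:var col}.
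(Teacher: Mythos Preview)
Your proposal is correct and follows exactly the route the paper indicates: the paper states that the corollary ``is straightforward and directly comes from~\eqref{ct-eq: cond law Y_T}'', and you have simply made explicit the substitution $T\mapsto t+T$ in that conditional law together with the deterministic shift by $\log(RC_0)$. Your care in distinguishing the $\cG_t$-measurable $B^{\cZ}$-integral over $[0,t]$ from the conditionally independent contributions (the $B^{\cZ}$-increment over $[t,t+T]$ and the full $\overline{\cW}$-integral over $[0,t+T]$) is precisely what produces the two different exponential ranges in~\eqref{ct-eq:var col}, and this is the only subtlety in an otherwise mechanical verification.
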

We now turn our attention to the pricing of a building, taking into account its energy efficiency. We would like to calculate the value of a dwelling at time time~$t$. We use the actualized sum of the cash flows before the renovation date (taking into account the additional energy costs due to inefficiency of the building), at the renovation date, and after the renovation date (when the building becomes efficient). Moreover, the agent rationally chooses the date of renovation maximizing the value of his property. We have

\begin{definition}[Housing price with climate transition]\label{ct-ass:housing}
    The market value of the building serving as the collateral to firm~$n$ at $t\geq 0$, given the carbon price sequence~$\delta$, is represented by
\begin{align}\label{ct-eq:def house price}
    \cC_{t,\delta} := R\displaystyle\operatorname*{ess~sup}_{\theta\geq t} \EE\left[\begin{array}{l}
 \int_{t}^{\theta}\left[D_{u}-(\alpha-\alpha^\star) \ff(\delta_u,\pf)\right] e^{-\bar r(u-t)} \dr u - \cc(\alpha, \alpha^\star)e^{-\bar r(\theta-t)}\\
 \qquad+ \int_{\theta}^{+\infty} \left[D_{u}-(\alpha^\star-\bar\alpha)_{+} \ff(\delta_u,\pf)\right]\dr u \end{array} \middle|\cU_t\right],
\end{align}
where 
\begin{itemize}
    \item $\cc$ is a continuous function from~$(\RR_+)^2$ to~$\RR_+^*$,
    \item for each source of energy $\pf$, $\ff(.,\pf)$ is a derivable function from~$\RR_+$ to~$\RR_+^*$,
    \item $r, \alpha, \bar\alpha > 0$ with $\alpha > \bar\alpha$ and $\alpha^\star \sim g$ (i.e. $\alpha^\star$ is random with probability density function $g$).
\end{itemize}
Moreover, if the optimal renovation date noted $\ft$ exists in $[t,+\infty]$, we have
\begin{align*}
    \cC_{t,\delta} := R\displaystyle\EE\left[\begin{array}{l}
 \int_{t}^{\ft}\left[D_{u}-(\alpha-\bar\alpha) \ff(\delta_u,\pf)\right] e^{-\bar r(u-t)} \dr u - \cc(\alpha, \alpha^\star)e^{-\bar r(\ft-t)}\\
 \qquad+ \int_{\ft}^{+\infty} \left[D_{u}-(\alpha^\star-\bar\alpha)_{+} \ff(\delta_u,\pf)\right] \dr u \end{array} \middle|\cU_t\right].
\end{align*}
\end{definition}
At time~$t$, $D_{t}$ represents the income at $t$ when the building is "perfectly" efficient (i.e. whose the energy efficiency is less than $\bar\alpha$) while $D_{t}-(\alpha-\bar\alpha) \ff(\delta_t,\pf)$ is the income of a non-renovated building which consequently undergoes the climate transition so that its income is the difference of the income of an equivalent efficient building minus the energy costs. 

Therefore, the term~$\left[D_{u}-(\alpha-\bar\alpha) \ff(\delta_u,\pf)\right] e^{-\bar r(u-t)}$ is the actualized income at~$t$ of the property before the optimal renovation date~$\ft$. The term~$[D_{u}-(\alpha^\star-\bar\alpha)_{+} \ff(\delta_u,\pf)] e^{-\bar r(u-t)}$ is  the actualized income at~$t$ of the property after~$\ft$. This means that after the renovation, if $\alpha^\star > \bar\alpha$, the building remains inefficient and continues to incur additional energy costs and if $\alpha^\star < \bar\alpha$, it becomes efficient. Furthermore, the term~$\cc(\alpha, \alpha^\star)e^{-r(\ft-t)}$ is the actualized renovation costs which are performed at time~$\ft$. We note finally that for all~$t\geq 0$, $C_{t}-\cC_{t,\delta} \leq R\EE[\cc(\alpha, \alpha^\star)]$ i.e. the agent can always renovate the home right away, but that might not be optimal.

\paragraph{An example of the energy price function}
We can assume that the price of each type of energy~$\pf$ is a linear function (introduced in Assumption~\ref{ct-ass:housing}) of the carbon price, therefore
\begin{equation}\label{ct-eq:f}
    \ff: (\delta_t,\pf) \mapsto \ff^\pf_1 \delta_t + \ff^\pf_0\qquad t\geq 0,
\end{equation}
with $\ff^\pf_1, \ff^\pf_0>0$ and $\delta$ is the carbon price defined in the Standing Assumption~\ref{ct-sassc:price} or an example given in~\eqref{ct-eq:carbon price}.

\paragraph{An example of the renovation costs function}
We can consider that the costs of renovation of a dwelling~$\cc$, to move its energy efficiency from $x$ to $y$, is,
\begin{equation}\label{ct-eq:c}
    \cc: (x,y) \mapsto c_0 |x-y|^{1+c_1},
\end{equation}
with $c_0>0$ and $c_1\geq -1$. This choice of $\cc$ allows us to model that when a building has a poor energy efficiency, its renovation is costly.

\paragraph{An example of the distribution of the energy efficiency}

For a building whose the energy efficiency is $\alpha$, we can assume that $\alpha^\star$ follows an uniform distribution on $[0,\alpha]$. We then have
\begin{equation}\label{ct-eq:law ee}
    g = \frac{1}{\alpha}\bOne_{[0,\alpha]}.
\end{equation}

The expression~\eqref{ct-eq:def house price} can be simplified in the following proposition.
{
\begin{theorem}\label{ct-prop:housing}
    Assume that the following conditions are satisfied:
    \begin{enumerate}
        \item the carbon price function $\delta: t\mapsto \delta_t$ is non decreasing on $\RR_+$ and deterministic;
        \item the energy price $\ff(\cdot, \pf)$ is non decreasing on $\RR_+$ for all~$\pf$,
        \item and $\alpha-\bar\alpha-\EE_{\alpha^\star\sim g}\left[(\alpha^\star-\bar\alpha)_{+} \right] > 0$.
    \end{enumerate}
    Then, the market value of the building serving as the collateral to firm~$n$ at $t\geq 0$, given the carbon price sequence~$\delta$, is given by
    \begin{align}\label{ct-eq:coldynamics1}
        \cC_{t,\delta} = C_{t} - R X_{t,\delta},
    \end{align}
    where
    \begin{footnotesize}
    \begin{align}\label{ct-eq:coldynamics4}
        X_{t,\delta} := (\alpha-\bar\alpha)\int_{t}^{\ft} \ff(\delta_u,\pf) e^{-\bar r(u-t)} \dr u + \EE_{\alpha^\star\sim g}\left[\cc(\alpha, \alpha^\star)\right]e^{-\bar r(\ft-t)} +\EE_{\alpha^\star\sim g}\left[(\alpha^\star-\bar\alpha)_{+} \right]\int_{\ft}^{+\infty} \ff(\delta_u,\pf) e^{-\bar r(u-t)} \dr u,
    \end{align}
    \end{footnotesize}
    and where the optimal date of renovations~$\ft\in[t,+\infty]$ is given by 
\begin{numcases}{\ft=}
$t$ & if $\ff(\delta_{\theta},\pf) - \bar r \frac{\EE_{\alpha^\star\sim g}\left[\cc(\alpha, \alpha^\star)\right]}{\EE_{\alpha^\star\sim g}\left[\alpha-\bar\alpha-(\alpha^\star-\bar\alpha)_{+}\right]} > 0$ for all $\theta\in[t,\infty)$\label{ct-eq:optimal t_n t} \\
+\infty & if $\ff(\delta_{\theta},\pf) - \bar r \frac{\EE_{\alpha^\star\sim g}\left[\cc(\alpha, \alpha^\star)\right]}{\EE_{\alpha^\star\sim g}\left[\alpha-\bar\alpha-(\alpha^\star-\bar\alpha)_{+}\right]} < 0$ for all $\theta\in[t,\infty)$ \label{ct-eq:optimal t_n infty}\\
\theta^\star & the unique solution of $\ff(\delta_{\theta},\pf) =\bar r \frac{\EE_{\alpha^\star\sim g}\left[\cc(\alpha, \alpha^\star)\right]}{\EE_{\alpha^\star\sim g}\left[\alpha-\bar\alpha-(\alpha^\star-\bar\alpha)_{+}\right]}$ on $\theta\in[t,\infty)$ \label{ct-eq:optimal t_n}
\end{numcases} 
\end{theorem}
\begin{proof}
    Let $n\in\OneN$ and $t\geq 0$, the difference between a building with energy efficiency~$\alpha$ and an equivalent one with efficiency~$\alpha^*$ is
    \begin{align*}
        C_{t}-\cC_{t,\delta} &
        = R\EE\left[\int_{t}^{+\infty} e^{- \bar r (s-t)} D_{s}\dr s \middle|\cU_t\right]\\
        &\qquad- R\displaystyle\operatorname*{ess~sup}_{\theta\geq t} \EE\left[\begin{array}{l}
 \int_{t}^{\theta}\left[D_{u}-(\alpha-\bar\alpha) \ff(\delta_u,\pf)\right] e^{-\bar r(u-t)} \dr u - \cc(\alpha, \alpha^\star)e^{-\bar r(\theta-t)}\\
 \qquad+ \int_{\theta}^{+\infty} \left[D_{u}-(\alpha^\star-\bar\alpha)_{+} \ff(\delta_u,\pf)\right]\dr u \end{array} \middle|\cU_t\right],\\
% &=R\displaystyle\operatorname*{ess~inf}_{\theta\geq t} \EE\left[\begin{array}{l}
 %\int_{t}^{+\infty} e^{- \bar r (s-t)} D_{s}\dr s - \int_{t}^{\theta}\left[D_{s}-(\alpha-\alpha^\star) \ff(\delta_u,\pf)\right] e^{-\bar r(u-t)} \dr u \\
 %\qquad+ \cc(\alpha, \alpha^\star)e^{-\bar r(\theta-t)}- \int_{\theta}^{+\infty} e^{- \bar r (s-t)} D_{t+s}\dr s \end{array} \middle|\cU_t\right],\\
    \end{align*}
    after a few calculations, we have
    \begin{align}
         C_{t}-\cC_{t,\delta}&=R\displaystyle\operatorname*{ess~inf}_{\theta\geq t} \EE\left[ \begin{array}{l}
\int_{t}^{\theta}(\alpha-\bar\alpha) \ff(\delta_u,\pf) e^{-\bar r(u-t)} \dr u +\cc(\alpha, \alpha^\star)e^{-\bar r(\theta-t)} \\
 \qquad+ \int_{\theta}^{+\infty}(\alpha^\star-\bar\alpha)_{+} \ff(\delta_u,\pf) e^{-\bar r(u-t)} \dr u
 \end{array}\middle|\cU_t\right]\\
 &=R\displaystyle\operatorname*{ess~inf}_{\theta\geq t} \EE\left[ \EE_{\alpha^\star\sim g}\left[\begin{array}{l}
\int_{t}^{\theta}(\alpha-\bar\alpha) \ff(\delta_u,\pf) e^{-\bar r(u-t)} \dr u +\cc(\alpha, \alpha^\star)e^{-\bar r(\theta-t)} \\
 \qquad+ \int_{\theta}^{+\infty}(\alpha^\star-\bar\alpha)_{+} \ff(\delta_u,\pf) e^{-\bar r(u-t)} \dr u
 \end{array}\right]\middle|\cU_t\right]\\
 &\label{ct-eq:opti depreciation}=R\displaystyle\operatorname*{ess~inf}_{\theta\geq t} \EE\left[ \begin{array}{l}
(\alpha-\bar\alpha)\int_{t}^{\theta} \ff(\delta_u,\pf) e^{-\bar r(u-t)} \dr u + \EE_{\alpha^\star\sim g}\left[\cc(\alpha, \alpha^\star)\right]e^{-\bar r(\theta-t)} \\
 \qquad+\EE_{\alpha^\star\sim g}\left[(\alpha^\star-\bar\alpha)_{+} \right]\int_{\theta}^{+\infty} \ff(\delta_u,\pf) e^{-\bar r(u-t)} \dr u
 \end{array}\middle|\cU_t\right].
    \end{align}
    According to Standing Assumption~\ref{ct-sassc:price}, $\delta$ is deterministic. We can then write
    \begin{align}\label{ct-eq:opti depreciation 2}
        C_{t}-\cC_{t,\delta} &
        \geq R\EE\left[ \displaystyle\operatorname*{inf}_{\theta\geq t} \left\{\begin{array}{l}
(\alpha-\bar\alpha)\int_{t}^{\theta} \ff(\delta_u,\pf) e^{-\bar r(u-t)} \dr u + \EE_{\alpha^\star\sim g}\left[\cc(\alpha, \alpha^\star)\right]e^{-\bar r(\theta-t)} \\
 \qquad+\EE_{\alpha^\star\sim g}\left[(\alpha^\star-\bar\alpha)_{+} \right]\int_{\theta}^{+\infty} \ff(\delta_u,\pf) e^{-\bar r(u-t)} \dr u
 \end{array}\right\}\middle|\cU_t\right]
    \end{align}
    The function under the "inf" that we note
    \begin{small}
    \begin{align*}
        H: \theta\mapsto &(\alpha-\bar\alpha)\int_{t}^{\theta} \ff(\delta_u,\pf) e^{-\bar r(u-t)} \dr u + \EE_{\alpha^\star\sim g}\left[\cc(\alpha, \alpha^\star)\right]e^{-\bar r(\theta-t)} +\EE_{\alpha^\star\sim g}\left[(\alpha^\star-\bar\alpha)_{+} \right]\int_{\theta}^{+\infty} \ff(\delta_u,\pf) e^{-\bar r(u-t)} \dr u,
    \end{align*}
    \end{small}
    is twice differentiable on~$[t,+\infty]$, its first order derivative is 
    \begin{align*}
        H':\theta \mapsto &\left(\EE_{\alpha^\star\sim g}\left[-\bar r\cc(\alpha, \alpha^\star)+(\alpha-\bar\alpha-(\alpha^\star-\bar\alpha)_{+}) \ff(\delta_\theta,\pf)\right] \right) e^{-\bar r(\theta-t)},
    \end{align*}
    and its second order derivative is 
    \begin{align*}
        H'':\theta \mapsto &-\bar r\left[\EE_{\alpha^\star\sim g}\left[-\bar r\cc(\alpha, \alpha^\star)+(\alpha-\bar\alpha-(\alpha^\star-\bar\alpha)_{+}) \ff(\delta_\theta,\pf)\right] \right]e^{-\bar r(\theta-t)}\\& + \EE_{\alpha^\star\sim g}\left[(\alpha-\bar\alpha-(\alpha^\star-\bar\alpha)_{+})\right] \delta'_\theta\ff'(\delta_\theta,\pf) e^{-\bar r(\theta-t)}.
    \end{align*}
    Assume that $\alpha-\bar\alpha-\EE_{\alpha^\star\sim g}\left[(\alpha^\star-\bar\alpha)_{+} \right] > 0$.
   \begin{enumerate}
       \item  If a solution noted~$\theta^\star$ of~\eqref{ct-eq:find optimal t_n} following 
    \begin{align}\label{ct-eq:find optimal t_n}
    \ff(\delta_{\theta^\star},\pf) = \bar r \frac{\EE_{\alpha^\star\sim g}\left[\cc(\alpha, \alpha^\star)\right]}{\EE_{\alpha^\star\sim g}\left[\alpha-\bar\alpha-(\alpha^\star-\bar\alpha)_{+}\right] },
\end{align}
exists in $[t+\infty)$ then, by remarking that $H''(\theta^\star) = (\alpha-\alpha^\star) \delta'_\theta\ff'(\delta_{\theta^\star},\pf) e^{-\bar r(\theta^\star-t)} \geq 0$, we obtain that $\theta^\star$ is a minimum. Moreover, according to~\eqref{ct-eq:opti depreciation}, $C_{t}-\cC_{t,\delta} \leq R H(\theta^\star)$. Combining with~\eqref{ct-eq:opti depreciation 2}, we conclude that 
 \begin{align*}
     &\displaystyle\operatorname*{ess~inf}_{\theta\geq t} \EE\left[\begin{array}{l}
(\alpha-\bar\alpha)\int_{t}^{\theta} \ff(\delta_u,\pf) e^{-\bar r(u-t)} \dr u + \EE_{\alpha^\star\sim g}\left[\cc(\alpha, \alpha^\star)\right]e^{-\bar r(\theta-t)} \\
 \qquad+\EE_{\alpha^\star\sim g}\left[(\alpha^\star-\bar\alpha)_{+} \right]\int_{\theta}^{+\infty} \ff(\delta_u,\pf) e^{-\bar r(u-t)} \dr u
 \end{array}\middle|\cU_t\right] \\
     &\qquad= \EE\left[ \displaystyle\operatorname*{inf}_{\theta\geq t} \left\{\begin{array}{l}
(\alpha-\bar\alpha)\int_{t}^{\theta} \ff(\delta_u,\pf) e^{-\bar r(u-t)} \dr u + \EE_{\alpha^\star\sim g}\left[\cc(\alpha, \alpha^\star)\right]e^{-\bar r(\theta-t)} \\
 \qquad+\EE_{\alpha^\star\sim g}\left[(\alpha^\star-\bar\alpha)_{+} \right]\int_{\theta}^{+\infty} \ff(\delta_u,\pf) e^{-\bar r(u-t)} \dr u
 \end{array}\right\}\middle|\cU_t\right]\\
     &\qquad= (\alpha-\bar\alpha)\int_{t}^{\theta^\star} \ff(\delta_u,\pf) e^{-\bar r(u-t)} \dr u + \EE_{\alpha^\star\sim g}\left[\cc(\alpha, \alpha^\star)\right]e^{-\bar r(\theta^\star-t)} \\
 &\qquad\qquad+\EE_{\alpha^\star\sim g}\left[(\alpha^\star-\bar\alpha)_{+} \right]\int_{\theta^\star}^{+\infty} \ff(\delta_u,\pf) e^{-\bar r(u-t)} \dr u,
 \end{align*}
\item If \eqref{ct-eq:find optimal t_n} does not have a solution on $[t,\infty)$, then
\begin{enumerate}
    \item if for all~$\theta\in[t,\infty)$, $\ff(\delta_{\theta},\pf) - \bar r \frac{\EE_{\alpha^\star\sim g}\left[\cc(\alpha, \alpha^\star)\right]}{\EE_{\alpha^\star\sim g}\left[\alpha-\bar\alpha-(\alpha^\star-\bar\alpha)_{+}\right]} > 0$, we have $H'>0$ on $[t,\infty)$. We can then write for all~$\theta\in[t,\infty)$, $H(t) \leq H(\theta)$ i.e. the "inf" of $H$ exists and is reached in~$t$. Finally, combining to~\eqref{ct-eq:opti depreciation} implying that~$C_{t}-\cC_{t,\delta} \leq R H(t)$, we conclude that $C_{t}-\cC_{t,\delta} = R H(t)$ and~\eqref{ct-eq:optimal t_n t} follows. 
    \item if for all~$\theta\in[t,\infty)$, $\ff(\delta_{\theta},\pf) - \bar r \frac{\EE_{\alpha^\star\sim g}\left[\cc(\alpha, \alpha^\star)\right]}{\EE_{\alpha^\star\sim g}\left[\alpha-\bar\alpha-(\alpha^\star-\bar\alpha)_{+}\right]} < 0$, we have $H'<0$ on $[t,\infty)$. We can then write for all~$\theta\in[t,\infty)$, $\lim_{x\mapsto+\infty} H(x) \leq H(\theta)$ i.e. the "inf" of $H$ is reached in~$+\infty$.\\
    Assume that $(\theta_m)_{m\in\NN}$ is a non decreasing and non negative sequence so that $\lim_{m\to+\infty} \theta_m = +\infty$. According to~\eqref{ct-eq:opti depreciation}, $C_{t}-\cC_{t,\delta} \leq R H(\theta_m)$ for all $m\in\NN$. And because $H$ is continuous, we have~$C_{t}-\cC_{t,\delta} \leq R \lim_{x\mapsto+\infty} H(x)$. We conclude that that $C_{t}-\cC_{t,\delta} = R \lim_{\theta\mapsto+\infty} H(\theta)$ and~\eqref{ct-eq:optimal t_n infty} follows. 
\end{enumerate}
   \end{enumerate}
\end{proof}
}
The date of renovations~$\ft$ (obtained by~\eqref{ct-eq:optimal t_n t}, \eqref{ct-eq:optimal t_n}, and\eqref{ct-eq:optimal t_n infty}) shows that the optimal date chosen to renovate the building mainly depends on the carbon price policy~$\delta$. We also remark that the shock price due to the climate transition $\cC_{\cdot,\delta}-C_{\cdot}$ is deterministic. This is because the carbon price as well as renovation costs are deterministic. Note that $t\mapsto X_{t,\delta}$ is continuous. If, at date $t$, we realize that the optimal renovation date is $\ft$, the best thing we can do is to spend $\cc(\alpha, \alpha^\star)$ at date $t$ in order to renovate. We also have the following remarks.

\begin{remark}
In our model, the usual price of housing is (partly) offset by the costs $X_{\cdot,\delta}$ associated with the climate transition. The dwelling price could also be negative. However, we can imagine many others ways to decline the effects of transition on real estate, for example,
\begin{equation*}
\left\{
\begin{array}{ll}
 & C_t = R C_0 \exp{K_t} \\
 & K_t = \left(\dot{\chi}_t + \nu(\chi_t - K_t) \right) \dr t + \overline{\sigma} \dr\overline{B}_t - \dr X_{t,\delta},
\end{array}
\right.
\end{equation*} 
where $X$ is a jump diffusion process.
\begin{enumerate}
    \item We could assume for example that $X$ follows a homogeneous Markov process: each year $t$, the energy efficiency jumps from state $s_{t-1}$ to state $s_t$, where $s_{t-1}, s_t \in \{A, B, C, D, E, F\}$, so that the price increases or decreases. A heat sieve that is renovated, for example, would therefore see its rating improved then, its price jumps. We could calibrate "easily" the transition from historical data.

    \item We could introduce the climate transition policy by the jump term~$X$, inspired by~\cite{le2022corporate}. That climate policy is characterized, for all $t\geq 0$, by a process
$X_{t,\delta} = \sum_{i=1}^{N_t} R_i$ where the Poisson process $N_t$ has a constant arrival rate $\lambda>0$ and $(R_i)_{i\geq 1}$ is a sequence of i.i.d. random variables, independent from $B^{\cZ}$. The choice of $(R_i)_{i\geq 1}$ expresses the fact that the climate transition could affect real estate price positively (if, for example, the energy renovation work is carried out in a building), or negatively (if, for example, regulations on housing emissions are tightened). 
\end{enumerate}    
\end{remark}

\paragraph{An example of the optimal renovation time}
 With the example of the carbon price in~\eqref{ct-eq:carbon price}, the example of the energy price in~\eqref{ct-eq:f}, the example of the renovation costs in~\eqref{ct-eq:c}, and the example of the energy efficiency distribution in~\eqref{ct-eq:law ee}, the optimal renovation time, solution of~\eqref{ct-eq:optimal t_n} is given by
\begin{align}\label{ct-eq:ex optimal t_n}
    \ft = t_\circ + \frac{1}{\eta_\delta} \log{\left(\frac{\frac{2 c_0 \bar r}{2+c_1} \frac{\alpha^{2+c_1}}{\alpha^2-\bar\alpha^2} -  \ff^\pf_0}{\ff^\pf_1 P_{carbon}}\right)},
\end{align}
because $\EE_{\alpha^\star\sim g}\left[\alpha-\bar\alpha-(\alpha^\star-\bar\alpha)_{+}\right] = \frac{(\alpha-\bar\alpha)(\alpha+\bar\alpha)}{2 \alpha}$ and $\EE_{\alpha^\star\sim g}\left[\cc(\alpha, \alpha^\star)\right] = \frac{c_0}{2+c_1} \alpha^{1+c_1}$ when $g = \frac{1}{\alpha}\bOne_{[0,\alpha]}$.

We can clearly remark that the optimal renovation date depends on the climate transition policy ($P_{carbon}$ and $\eta_\delta$), on the energy prices ($\ff^\pf_0$ and $\ff^\pf_1$), on the renovation costs ($c_0$ and $c_1$), and on the energy efficiencies ($\alpha$ and $\bar\alpha$).

\section{Numerical experiments, estimation and calibration}\label{sec:estim calib}
In this section, we describe how the parameters of multisectoral model are estimated given the historical macroeconomic variables (consumption, labour, output, GHG emissions, housing prices, etc.).

\subsection{Calibration and estimation}\label{ct-sec:calibcoll_house}
We will calibrate the model parameters on a set of data ranging from time $\ft_0$ to $\ft_1$. In practice, $\ft_0=1978$ and $\ft_1=t_\circ = 2021$. From now on, we will discretize the observation interval into $M\in\NN^*$ steps $t_m = \ft_0 + \frac{\ft_1 -\ft_0}{M} m$ for $0\leq m\leq M$. We note $\Ttt^M := \{t_0, t_1,\hdots,t_{M}\}$. We will not be interested in convergence results here.

\subsubsection{Estimation of economic parameters} \label{ct-subsec:calibva}

Using macroeconomic data (output, labor, intermediary input , and the consumption), we can calculate the trajectory of productivity growth achieved. So we have $(\Theta_{t_m})_{1\leq m\leq M}$.
 We can then compute the estimations~$\widehat{\mu}$, $\widehat{\Gamma}$, $\widehat{\Sigma}$ and $\widehat{\varsigma}$,  parameters $\mu$, $\Gamma$, $\Sigma$, and~$\varsigma$ (all defined in Standing Assumption~\ref{ct-sassump:OU}).

As $\cZ$ is a centered O.-U., $\mu$ correspond to the mean. We have 
\begin{equation*}
\widehat{\mu} = \frac{1}{M} \sum_{m=1}^{M} {\Theta}_{t_m}. 
\end{equation*} 
Then, we can take $\varsigma$ so that $\Vv[Z_t] = 1$ for all $t\in\RR$ , then $\varsigma^2 = \Vv[Z_t]$, then
\begin{equation*}
    \widehat\varsigma = \sqrt{\frac{1}{M-1} \sum_{m=1}^{M} ({\Theta}_{t_m} - \widehat{\mu})^\top  ({\Theta}_{t_m} - \widehat{\mu})}.
\end{equation*}
For all~$1\leq m\leq M$, we then have $\widehat\cZ_{t_{m}} := \frac{{\Theta}_{t_m}-\widehat{\mu}}{\widehat\varsigma}$. If we discretize the first equation of~\eqref{ct-eq:VAR}, we also have ,
\begin{equation}
    \widehat\cZ_{t_{m}} = \left(\Ir - \frac{\ft_1-\ft_0}{M}\Gamma\right) \widehat\cZ_{t_{m-1}} + \cE_{t_m},\quad\text{with}\quad \cE_{t_m} = \Sigma (B^{\cZ}_{t_{m}} - B^{\cZ}_{t_{m-1}}) \sim \cN\left(0, \frac{\ft_1-\ft_0}{M} \Sigma \right) \label{ct-eq:Zt_k}.
\end{equation}
The discrete process $(\widehat\cZ_{t_{m}})_{1\leq m\leq M}$ is then a VAR process. The estimations~$\widehat\Sigma$ and $\widehat\Gamma$ of $\Sigma$ and $\Gamma$ are obtained directly, respectively.

\subsubsection{Calibration of the real estate parameters}

We assume that, historically, the carbon price did not impact the dwelling prices so that for all~$t\in\Ttt^M$, $X_{t,\delta}$ defined in~\eqref{ct-eq:coldynamics4} is zero. Moreover,  $C_{0}$ defined in~\eqref{ct-eq:coldynamics1}, the value of the collateral at $0$, is known. All that remains is to calibrate the parameters of the process $K$ defined in~\eqref{ct-eq:coldynamics2} and~\eqref{ct-eq:coldynamics3}. Let us consider a real estate index~$(REI_{t_m})_{0\leq m\leq M}$, then for each~$1\leq m\leq M$, $K_{t_m} := \log{(REI_{t_m})}$. For calibration, we proceed exactly as~\cite{fabozzi2012pricing}. Let assume that the long-term average of the real estate index~$\chi$, introduced in Assumption~\ref{ass:housing market}, is linear  as~\cite{frontczak2015modeling} do, therefore for all $t\in\RR_+$, $\chi_t = \varrho t + \vartheta$. The estimation of the parameters~$(\varrho, \vartheta)$ is realized prior to the others.
\begin{itemize}
    \item $\varrho$ and $\vartheta)$ can be estimated with a minimization procedure, possibly
    nonlinear, by 
    \begin{equation*}
        (\widehat\varrho, \widehat\vartheta) =  argmin_{(\varrho, \vartheta)} \left\{\sum_{k=0}^{M} (K_{t_m} - \varrho t_m - \vartheta)^2\right\}.
    \end{equation*}
    \item the estimation of the mean-reversion parameter~$\nu$ (introduced in~\eqref{ct-eq:coldynamics2}),
    \begin{equation*}
        \widehat{\nu} := \log{\frac{\sum_{m=1}^{M} K_{t_{m-1}}^2}{\sum_{m=1}^{M} K_{t_m} K_{t_{m-1}}}},
    \end{equation*}
    \item the estimation of the volatility parameter~$\overline{\sigma}$ (introduced in~\eqref{ct-eq:coldynamics2}),
    \begin{equation*}
        \widehat{\overline{\sigma}}^2 := \frac{1}{M}\sum_{m=1}^{M} (K_{t_m} - K_{t_{m-1}})^2,
    \end{equation*}
    \item From~\eqref{ct-eq:coldynamics2}, we have $1 <m\leq M$, the increments of $\overline{B}$ corresponds to
    \begin{equation*}
        u_{t_m}^{\overline{B}} := \frac{1}{\widehat{\overline{\sigma}}} \left((K{t_m} - K_{t_{m-1}}) - \left(\widehat{\varrho}+ \widehat{\nu}(\widehat{\varrho} t_{m-1} + \widehat\vartheta - K_{t_{m-1}}) \right) \frac{\ft_1-\ft_0}{M}
        \right) \sim \cN\left(0, \frac{\ft_1-\ft_0}{M}\right), 
    \end{equation*}
    and from $\cE_{t_m}$ defined in~\eqref{ct-eq:Zt_k},  the increments of $B^{\cZ}$ corresponds to
    \begin{equation*}
        u_{t_m}^{B^{\cZ}} := \widehat{\Sigma}^{-1} \cE_{t_m} \in\sim \cN\left(0, \frac{\ft_1-\ft_0}{M}\Ir_I\right).
    \end{equation*}
    We see from~\eqref{ct-eq:coldynamics3}, that 
    \begin{footnotesize}
    \begin{equation*}
        \widehat\rho^\top := \left[\frac{1}{M}\sum_{m=1}^{M} \left(u_{t_m}^{B^{\cZ}} - \frac{1}{M}\sum_{i=1}^{M} u_{t_i}^{B^{\cZ}}\right) \left(u_{t_m}^{\overline{B}} - \frac{1}{M}\sum_{i=1}^{M} u_{t_i}^{\overline{B}}\right) \right] \left[\frac{1}{M}\sum_{m=1}^{M} \left(u_{t_m}^{B^{\cZ}} - \frac{1}{M}\sum_{i=1}^{M} u_{t_i}^{B^{\cZ}}\right) \left(u_{t_m}^{B^{\cZ}} - \frac{1}{M}\sum_{i=1}^{M} u_{t_i}^{B^{\cZ}}\right)^\top \right]^{-1}.
    \end{equation*}
    \end{footnotesize}
    \item the other parameters~$\bar r, R, \alpha, \alpha^\star$ useful to compute~$X$ are given. Recall that examples of $\ff$ and $\cc$ are defined in~\eqref{ct-eq:f} and in~\eqref{ct-eq:c}. 
\end{itemize}

\subsection{Simulations}
In this section as well, the idea here is not to (re)demonstrate or improve convergence results.
\subsubsection{Of the productivities~$\cZ$ and $\cA$}\label{ct-sec:simul Theta and A}
Let $K~\in\NN$, for $0\leq k\leq K$, we note $u_k = t_\circ + \frac{t_\star -t_\circ}{K} k$ for $0\leq k\leq K$. We would like to simulate $\cZ_{u_k}$ and $\cA_{u_k}$. 
For $\cZ$, we adopt the Euler-Maruyana~\cite{maruyama1955continuous, kanagawa1988rate} scheme: we have $\cZ_{t_\circ}$ and
\begin{equation}\label{ct-eq:approx Z}
 \left\{
    \begin{array}{lll}
        \widehat\cZ_{u_k} &= \widehat\cZ_{u_{k-1}} -\widehat\Gamma \widehat\cZ_{u_{k-1}}  \frac{t_\star-t_\circ}{K} + \widehat\Sigma \eta_{u_k},\qquad \eta_{u_k}~\cN\left(0, \frac{t_\star-t_\circ}{K}\Ir_I\right) \quad\text{and}\quad k = 1,\hdots,K\\
        \widehat\cZ_{t} &= \widehat\cZ_{u_{k-1}},\qquad u_{k-1}\leq t \leq u_{k} \quad\text{and}\quad k = 1,\hdots,K\\
        %\widehat\cZ_{t} &= e^{-\widehat{\Gamma} t} \cZ_{t_\circ}\qquad t_{\circ}\leq t \leq u_{1}
    \end{array}
\right..
\end{equation}
Then, given that $\cA_{t_\circ}$ is known and $\cA_t =\int_{t_\circ}^{t}\left(\mu + \varsigma \cZ_u\right) \dr u  =  \cA_{t_\circ} + \mu (t-t_\circ ) + \varsigma \int_{t_\circ}^{t} \cZ_u \dr u$, we have
\begin{equation*}
    \int_{t_\circ}^{t} \widehat\cZ_u \dr u = \int_{u_{k-1}}^{t} \widehat\cZ_u \dr u + \sum_{i=1}^{k} \int_{u_{i-1}}^{u_{i}} \widehat\cZ_u \dr u = (t-u_{k-1}) \widehat\cZ_{u_{k-1}} + \frac{t_\star-t_\circ}{K}\sum_{i=1}^{k} \widehat\cZ_{u_{i-1}}
\end{equation*}
and then for each $k=1,\hdots,K$ and $u_{k-1}\leq t \leq u_{k}$,
\begin{equation}\label{ct-eq:approx A}
    \widehat\cA_t = \cA_{t_\circ} + \widehat\mu (t-t_\circ ) + \widehat\varsigma\left((t-u_{k-1}) \widehat\cZ_{u_{k-1}} + \frac{t_\star-t_\circ}{K}\sum_{i=1}^{k} \widehat\cZ_{u_{i-1}}\right).
\end{equation}
\begin{remark}
    We could also adapt the exact simulation of the multidimensional Ornstein–Uhlenbeck~$\cZ$ based on~\cite{li2019exact} or~\cite{de2020exact}.
\end{remark}

\subsection{Of the housing price}

We compute in order~\eqref{ct-eq:mean col} and ~\eqref{ct-eq:var col}. Since $\chi_t = \varrho t + \vartheta$, and $C_0$ and $R$ are known

    \begin{equation*}
        \widehat m_{t, 0} := \log{(R C_0)} + (\widehat\varrho t + \widehat\vartheta) - (\widehat\vartheta-K_0)e^{-\widehat\nu t} + \widehat{\overline{\sigma}}\widehat\rho^\top \sum_{k=0}^{L} e^{-\widehat\nu(t-\frac{k t}{L})} \eta_{u_\frac{k T}{L}},\quad \eta_{\frac{k t}{L}}\sim\cN\left(0, \frac{t}{L}\Ir_I\right), k=0,\hdots,L,
    \end{equation*}
    with $L\in\NN^*$, and
    \begin{equation*}
        \widehat v_{t, 0} :=    \frac{(\widehat{\overline{\sigma}})^2 (1-(\lVert\widehat\rho\rVert)^2)}{\widehat\nu}\left(1-e^{-2\widehat\nu t}\right).
    \end{equation*}
and $\widehat X$ is obtained by considering that from~\eqref{ct-eq:coldynamics4},
\begin{equation*}
 \widehat X_{t,\delta} = \cc(\alpha, \alpha^\star) e^{-r(\ft_n-t)} + (\alpha-\alpha^\star)\frac{(\ft-t)}{P}\sum_{p=1}^{P} \ff(\delta_{v_p},\pf) e^{-r(v_p-t)},
\end{equation*}
 and where $\gamma$, $k$, $r$, and $R$ are known, $\ft_n$ given by~\eqref{ct-eq:ex optimal t_n}, $u_l:= \frac{(t_\star-t) l}{L},l=0,\hdots,L$, and $v_p:= \frac{(\ft_n-t) p}{P},l=0,\hdots,P$.  \\
 
\noindent Calculating these three quantities enables us to run simulations with confidence intervals.

\section{Discussion}\label{sec:discussion}
In this section, we describe the data used to calibrate the different parameters, we perform some simulations, and we comment the results.
\subsection{Data}\label{result:data} As in~\cite{bouveret2023propagation}, we work on data related to the French economy.
\begin{enumerate}
    \item Due to data availability (precisely, we do not find public monthly/quaterly data for the intermediary inputs), we consider an annual frequency.
    \item\label{histo-macro-data} Annual consumption, labor, output, and intermediary inputs come from INSEE\footnote{The French National Institute of Statistics and Economic Studies} from 1978 to 2021 (see~\cite{insee2023sut} for details) and are expressed in billion euros, therefore $\ft_0=1978$, $\ft_1 = 2021$, and $M = 44$. 
    \item For the climate transition, we consider a time horizon of ten years with $t_\circ = 2021$ as starting point, a time step of one year and $t_\star = 2030$ as ending point.  In addition, we will be extending the curves to 2034 to see what happens after the transition, even though the results will be calculated and analyzed during the transition.
    \item The 38 INSEE sectors are grouped into four categories: \textit{Very High Emitting}, 
    \textit{Very Low Emitting}, 
    \textit{Low Emitting}, and
    \textit{High Emitting}, based on their carbon intensities.
    \item Metropolitan France housing price index comes from \textit{OECD data} and are from 1980 to 2021 (see~\cite{oecd2024housingpindex} for details) in \textit{Base 2015}. We renormalize in \textit{Base 2021}. We plot in Figure~\ref{ct-fig:housing_price_index} below.
    \begin{figure}[ht!]
        \centering
    \includegraphics[width=1.\textwidth]{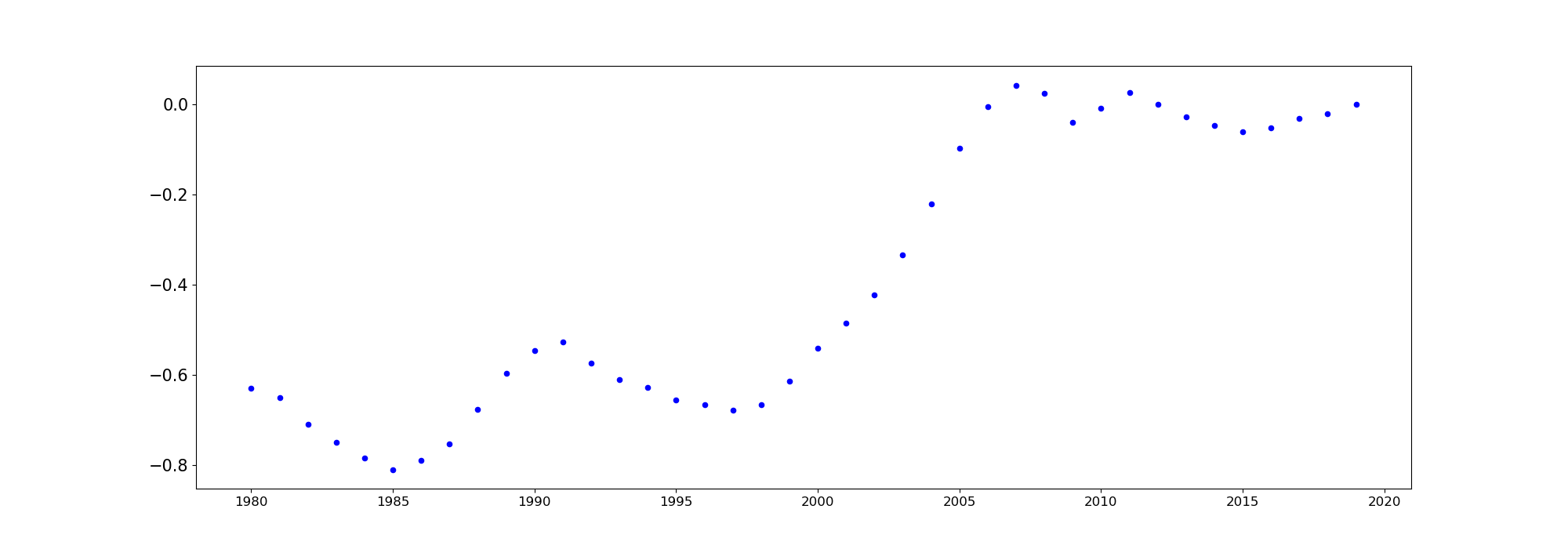}
        \caption{Log of the HPI in Base 2021 from 1980 to 2021}
        \label{ct-fig:housing_price_index}
    \end{figure}
\end{enumerate}

\subsection{Definition of the climate transition} \label{ct-subsec:calibtaxes} 
We consider four deterministic transition scenarios giving four deterministic carbon price trajectories. The scenarios used come from the NGFS simulations, whose descriptions are given by~\cite{ngfs2020scenario} as follows:
\textit{\begin{itemize}
    \item \textbf{Net Zero 2050} is an ambitious scenario that limits global warming to $1.5^\circ C$ through stringent climate policies and innovation, reaching net zero $\mathrm{CO}_2$ emissions around 2050. Some jurisdictions such as the US, EU and Japan reach net zero for all GHG by this point.
   \item \textbf{Divergent Net Zero} reaches net-zero by 2050 but with higher costs due to divergent policies introduced across sectors and a quicker phase out of fossil fuels.
   \item \textbf{Nationally Determined Contributions (NDCs)} includes all pledged policies even if not yet implemented.
   \item \textbf{Current Policies} assumes that only currently implemented policies are preserved, leading to high physical risks.
\end{itemize}}
For each scenario, we compute the carbon price $P_{carbon, 0}$ in $t_0$ and the evolution rate~$\eta_\delta$ as defined in~\eqref{ct-eq:carbon price}. 
\begin{table}[!ht]
\small \centering
\begin{tabular}{|r|r|r|r|r|}
\hline
\textit{}& \textbf{Current Policies} & \textbf{NDCs} & \textbf{Divergent Net Zero} & \textbf{Net Zero 2050} \\  \hline
\textbf{$P_{carbon, 0}$ (in euro/ton)} & 30.957&	33.321&	32.963&	34.315 \\ \hline
\textbf{$\eta_\delta$ (in \%)}& 1.693&	7.994&	12.893&	17.935\\ \hline
\end{tabular}
\caption{Carbon price parameters}
\label{ct-tab:carbon_price_params}
\end{table}
We can then compute the carbon price, whose evolution is plotted in Figure~\ref{fig:carbon_price_per_scenario}, at each date using~\eqref{ct-eq:carbon price}. 

For the energy price, we consider electricity as the unique source of energy. Then, we assume a linear relation between the electricity and the carbon price inspired by~\cite{abrell2023rising}, where a variation of the carbon price is linked withe the variation of the electricity by a the pass-through rate noted~$k$. This means that~$\ff^{elec}_1$ and~$\ff^{elec}_2$  define in~\eqref{ct-eq:f} are respectively~$k$ and $P_{elec, 0} - k \times P_{carbon, 0}$. For France, we take the electricity price $P_{elec, 0} = 0.2161$ euro per Kilowatt-hour and $k= 0.55$ (see~\cite{abrell2023rising}) ton per Kilowatt-hour. Its evolution is plotted in Figure~\ref{fig:Energy_price_per_scenario}.

For the renovation costs to improve a building for the energy efficiency~$\alpha$ to~$\alpha^\star$ as defined in~\eqref{ct-eq:c}, we take $c_0 = 0.01$ euro per kilowatt-hour and per square meter (\euro/KWh.m$^2$) and $c_1 = 0.1$.

\subsection{Estimations}
%We estimated first the economic parameters i.e. the productivity process. Then we calibrated the housing price index parameters.
\subsubsection{Economic parameters} \label{ct-subsec:result_va}
We therefore calibrate ${\mu}$, ${\varsigma}$, ${\Sigma}$, and ${\Gamma}$ as an Ornstein-Uhlenbeck, we detailed in Section~\ref{ct-subsec:calibva}.
\begin{table}[ht!]
\small \centering
\begin{tabular}{|r|r|r|r|r|}
\hline
\textit{\textbf{Emissions Level}} & \multicolumn{1}{l|}{\textbf{\begin{tabular}[c]{@{}l@{}}Very High\end{tabular}}} & \multicolumn{1}{l|}{\textbf{\begin{tabular}[c]{@{}l@{}}High \end{tabular}}} & \multicolumn{1}{l|}{\textbf{\begin{tabular}[c]{@{}l@{}}Low\end{tabular}}} & \multicolumn{1}{l|}{\textbf{\begin{tabular}[c]{@{}l@{}}Very Low\end{tabular}}}  \\ \hline
\textit{\textbf{$\times 10^{-3}$}} & 5.602&	8.475&	3.834&	12.099 \\ \hline
\end{tabular}
\caption{Parameter~${\mu}$ of the productivity}
\label{ct-tab:mu}
\end{table}

\begin{table}[!ht]
\small \centering
\begin{tabular}{|r|r|r|r|r|}
\hline
\multicolumn{1}{|r|}{\textit{\textbf{Emissions Level}}} & \multicolumn{1}{l|}{\textbf{\begin{tabular}[c]{@{}l@{}}Very High\end{tabular}}} & \multicolumn{1}{l|}{\textbf{\begin{tabular}[c]{@{}l@{}}High \end{tabular}}} & \multicolumn{1}{l|}{\textbf{\begin{tabular}[c]{@{}l@{}}Low\end{tabular}}} & \multicolumn{1}{l|}{\textbf{\begin{tabular}[c]{@{}l@{}}Very Low\end{tabular}}} \\  \hline
\textbf{Very High} & -0.201&	-0.056&	0.113&	-0.036 \\ \hline
\textbf{High} & 0.091&	0.420&	0.214&	0.015 \\ \hline
\textbf{Low} & -0.103&	-0.003&	-0.122&	0.160 \\ \hline
 \textbf{Very Low} & 0.493&	0.168&	0.290&	0.652 \\ \hline
\end{tabular}
\caption{Parameter~${\Gamma}$ of the productivity}
\label{ct-tab:gamma}
\end{table}

\noindent The eigenvalues of ${\Gamma}$ are $\{1.544, 1.057, 0.636, 0.014\}$ which implies that $-{\Gamma}$ is a Hurwitz matrix, therefore $\cZ$ is weak-stationary as assumed. Moreover, ${\varsigma} = 0.026$.
\begin{table}[!ht]
\small \centering
\begin{tabular}{|r|r|r|r|r|}
\hline
\multicolumn{1}{|r|}{\textit{\textbf{Emissions Level}}} & \multicolumn{1}{l|}{\textbf{\begin{tabular}[c]{@{}l@{}}Very High\end{tabular}}} & \multicolumn{1}{l|}{\textbf{\begin{tabular}[c]{@{}l@{}}High \end{tabular}}} & \multicolumn{1}{l|}{\textbf{\begin{tabular}[c]{@{}l@{}}Low\end{tabular}}} & \multicolumn{1}{l|}{\textbf{\begin{tabular}[c]{@{}l@{}}Very Low\end{tabular}}} \\  \hline
\textbf{Very High} & 0.473&0.013&0.072&0.092 \\ \hline
\textbf{High} & 0.013&0.208&0.039&0.037\\ \hline
\textbf{Low} & 0.072&0.039&0.059&0.020\\ \hline
 \textbf{Very Low} & 0.092&	0.037&0.020&0.068 \\ \hline
\end{tabular}
\caption{Parameter~${\Sigma}$ of the productivity}
\label{ct-tab:sigma}
\end{table}

 \subsubsection{The housing pricing index (HPI)}\label{ct-sec:result_coll}
 We write the housing price index~$K$ in \textit{Base 2021} and we apply the logarithm function. This means that $K_{t_0} = 0$ as shown in Figure~\ref{ct-fig:housing_price_index}. 
 We can therefore calibrate~$\varrho, \vartheta, \nu, \overline\sigma$, and $\rho$. The values are presented in Table~\ref{ct-tab:housing_price_index} below. 
\begin{table}[ht!]
    \centering
    \begin{tabular}{|c|c|}
\hline
\textbf{Parameter}           & \textbf{Value}  \\ \hline
\textbf{$\varrho$}           & 0.024  \\ \hline
\textbf{$\vartheta$}         & -0.884 \\ \hline
\textbf{$\nu$}               & 0.026  \\ \hline
\textbf{$\overline{\sigma}$} & 0.050  \\ \hline
\textbf{$\rho$}              &  [-0.019, -0.042, -0.017, 0.015]      \\ \hline
\end{tabular}
\caption{Housing price index parameters}
\label{ct-tab:housing_price_index}
\end{table}
\subsection{Simulations and discussions}\label{sssec:impact_on_hp}
In order to illustrate the impact of the carbon price on the housing market, we consider here 5 buildings located in the French economy whose characteristics: the price at~$t_\circ$, $C_0$, the energy efficiency~$\alpha$, the surface~$R$, are given in Table~\ref{ct-tab:impact_on_hp}

\begin{table}[ht!]
\small\centering
\begin{tabular}{|l|r|r|r|r|r|}
\hline
{ \textbf{Building }}          & \textbf{$1$} &  \textbf{$2$} &  \textbf{$3$} & \textbf{$4$}&  \textbf{$5$} \\ \hline
{ \textbf{$C_n^0$}} & 4000& 4000& 4000& 4000& 4000\\ \hline\hline
{ \textbf{$\alpha$}}          & 320.& 253.& 187.& 120.&70.\\ \hline\hline
{ \textbf{$R$}} & 25.0 &  25.0 & 25.0 & 25.0 & 25.\\ \hline
\end{tabular}
\caption{Characteristics of the building}
\label{ct-tab:impact_on_hp}
\end{table}
Precisely, we consider 5 apartments of 25 square meters whose price of the square meter fixed to 4000 euros in $t_\circ=2021$ is the same for all, but whose the energy efficiency are different.
Moreover, we assume that the optimal energy efficiency equals to $\alpha^\star = 70$ kilowatt hour per square meter per year (see~\cite{TotalEnergies2024dpe}) is reached. We can compute and summarize in table~\ref{ct-tab:opti renov date} the optimal renovation date whose the expression is given in~\eqref{ct-eq:ex optimal t_n}.
\begin{table}[ht!]
\small \centering
\begin{tabular}{|r|r|r|r|r|r|}
\hline
\textit{\textbf{Emissions level}} & \textbf{Building 1} & \textbf{Building 2} & \textbf{Building 3} &  \textbf{Building 4} & \textbf{Building 5} \\ \hline
\textit{\textbf{Current Policies}}          & 89.32&	115.16&	144.325&	185.32&	304.14 \\ \hline
\textit{\textbf{NDCs}}          & 14.21&	18.32&	22.96&	29.48&	48.38 \\ \hline
\textit{\textbf{Divergent Net Zero}}  & 8.81&	11.36&	14.24&	18.28&	29.99 \\ \hline
\textit{\textbf{Net Zero 2050}}     & 6.33&	8.17&	10.23&	13.14&	21.57 \\ \hline
\end{tabular}
\caption{Optimal renovation date (in years) per scenario and per building}
\label{ct-tab:opti renov date}
\end{table}

We observe in Table~\ref{ct-tab:opti renov date} that the optimal renovation date increases:
\begin{itemize}
    \item when the building is efficient ($\alpha$ decreases) and when $c_1 <0$: there is no point in renovating an efficient building.
    \item when the climate transition speeds up: energy costs become unbearable if we do not renovate quickly,
    \item when the renovation costs increases: if renovation costs become too high, it is better to bear the energy costs.
\end{itemize}
\begin{figure}[!ht]
    \centering
    \includegraphics[width=0.95\textwidth]{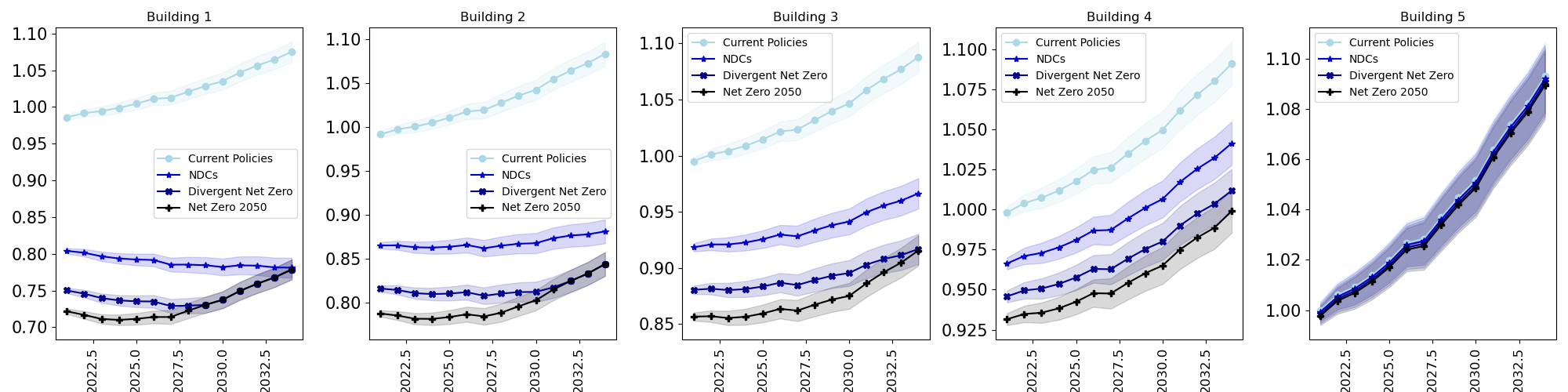}
    \caption{Apartment value per scenario and per year}
    \label{ct-fig:distrib_building}
\end{figure}
As in the case of firm values, we normalize the building values by the price of the most efficient building (Building 5) at the beginning of the transition~$t_\circ$ for ease of reading. 

We can therefore observe that, since in the scenario \textit{Current Policies}, the price of electricity does not really increase (see Figure~\ref{fig:Energy_price_per_scenario}), the optimal renovation dates are very large (much larger than the potential lifespan of the building). Therefore, if there is no climate transition, it is not necessary to renovate (for this unique reason). A direct consequence of low-cost energy and a very distant renovation date is that dwelling prices follow a persistent trend of appreciation, as observed in historical data.
\begin{table}[ht!]
\small \centering
\begin{tabular}{|r|r|r|r|r|r|}
\hline
\textit{\textbf{Emissions level}} & \textbf{Building 1} & \textbf{Building 2} & \textbf{Building 3} &  \textbf{Building 4} & \textbf{Building 5} \\ \hline
\textit{\textbf{Current Policies}}          & -1.506&	-0.870&	-0.506&	-0.209&	0.000 \\ \hline
\textit{\textbf{NDCs}}          & -22.646&	-15.526&	-9.351&	-3.848&	-0.080 \\ \hline
\textit{\textbf{Divergent Net Zero}}  &-28.007&	-20.721	&-13.480&	-6.151&	-0.179 \\ \hline
\textit{\textbf{Net Zero 2050}}     & -29.776&	-23.055&	-15.744&	-7.614&	-0.263 \\ \hline
\end{tabular}
\caption{Average annual slowdown of the housing price with respect to the \textit{Current Policies} scenario between 2021 and 2030 (in \%)}
\label{ct-tab:building_growth_slowdown}
\end{table}

For each scenario and each building, each point of the curve represents the value of the building at date $t$ if the optimal renovation date is (if $t\leq \ft$) or was (if $t>\ft$) $\ft$.
Almost all the building prices continue to grow with time as illustrated on Figure~\ref{ct-fig:distrib_building}, but these growths are more or less affected by their energy efficiency. Moreover, given that the impact of the transition on price through energy and renovation costs, the latter are stronger in the beginning. In fact, the more time passes, the closer we get to the end the climate transition ($t_\star=2030$ in our scenarios) as well as the potential date of renovations. If we look at Figure~\ref{ct-fig:housing_slowdown}, we remark that when the valuation date is later than the optimal renovation date, the best thing to do is to renovate directly. This stabilizes or even reverses the price decline. Moreover, by adding the energy costs before renovations, we could reach 20 to 30\% of depreciation when the carbon price (so the energy price) is pretty high. This seems consistent with the idea that a thermal sieve loses all its value and could become impossible to sell because of the enormous costs involved in owning it.

\begin{figure}[!ht]
    \centering
    \includegraphics[width=0.95\textwidth]{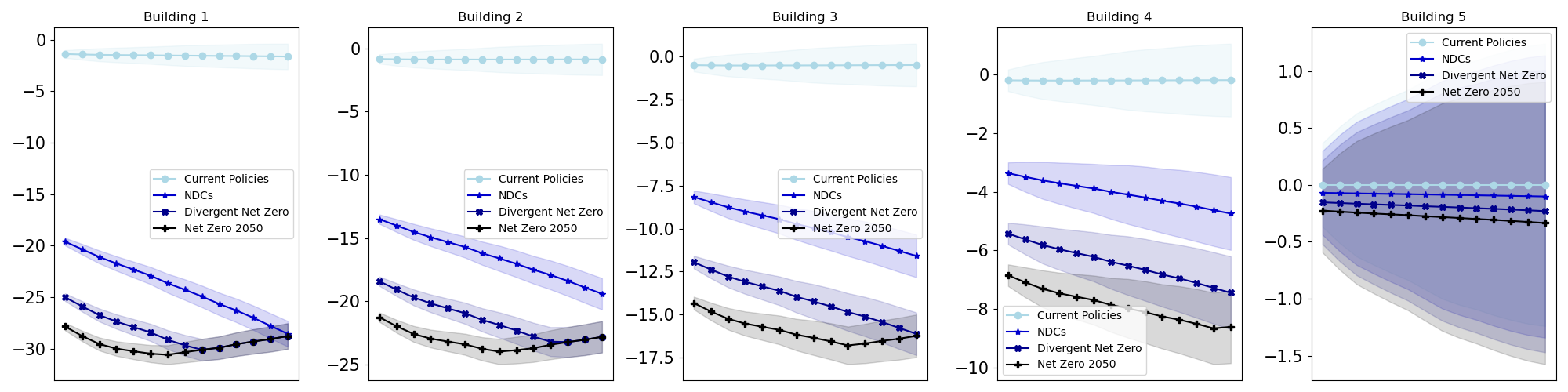}
    \caption{Apartment value slowdown per scenario and per year}
    \label{ct-fig:housing_slowdown}
\end{figure}
Finally, it should be noted that the property value discounts obtained in this work correspond to statistical observations, for example by \cite{de2016price}.

\section*{Conclusion}

The aim of this work was to propose a model to assess and quantify the impact of the climate transition on real estate prices. We first considered that real estate is an asset belonging to an economy organized in sectors, driven by dynamic and stochastic productivity following an Ornstein-Uhlenbeck dynamic, and subject to a dynamic and deterministic carbon price. Then, we assumed that the carbon price has two effects on an energy-inefficient building: it incurs additional energy costs, the price of energy is an increasing function of the carbon price, on the other hand, it is forced to be renovated to become efficient, but at a cost that depends on its inefficiency. The proposed model is built from two approaches to real estate modeling found in the literature: the \textit{income approach} and the \textit{sales comparison approach}. We obtained that the price discount due to inefficiency depends on the carbon intensity of the housing, the carbon price process, the unit renovation costs, the link between the carbon price and the energy price, etc. This work has many practical applications both in government policies (for defining of transition speed, renovation policies, etc.), in asset management (for portfolio construction notably) and in credit risk (for collateral selection and for loss modeling). Finally, it opens the door to several extensions. The building could be renovated several times over several years before reaching maximum energy efficiency. Could the renovation increase the price of the building instead of simply making it regain the historical trend? We can also model the physical risk, which would depend in particular on the location of the dwelling.

\newpage
\small
\bibliographystyle{apalike}
\bibliography{main}
\newpage

\appendix

%%%%%%%%%%%%%%%%%%%%%%%%%%%%%%%%%%%%%%%
%%%%%%%%%%%%%%%%%%%%%%%%%%%%%%%%%%%%%%%
\normalsize

\section{Ornstein-Uhlenbeck process} \label{ct-app:OU}

Let $t, h\geq 0$, from the second equation of~\eqref{ct-eq:VAR}, 
\begin{equation*}
    \cA_{t+h} = \cA_{t} + \mu h + \varsigma \int_{t}^{t+h} \cZ_s \dr s.
\end{equation*}
However, for all $t\in[t,t+h]$, from the first equation of~\eqref{ct-eq:VAR},
\begin{equation*}
    \cZ_s = e^{-\Gamma (s-t)} \cZ_t + \int_{t}^{s} e^{-\Gamma (s-u)} \Sigma \dr B_u^{\cZ},
\end{equation*}
therefore,
\begin{equation*}
    \cA_{t+h} = \cA_{t} + \mu h + \varsigma \left(\int_{t}^{t+h} e^{-\Gamma (s-t)} \dr s\right) \cZ_t + \varsigma \int_{t}^{t+h}  \left(\int_{t}^{s} e^{-\Gamma (s-u)} \Sigma \dr B_u^{\cZ}\right) \dr s.
\end{equation*}
We then have
\begin{equation*}
    \cA_{t+h} = \cA_{t} + \mu h + \varsigma\Upsilon_{h}\cZ_t + \varsigma \int_{t}^{t+h} e^{-\Gamma s} \left(\int_{t}^{s} e^{\Gamma u} \Sigma \dr B_u^{\cZ}\right) \dr s,
\end{equation*}
where $(\Upsilon_{h})_{h\geq 0}$ is defined in~\eqref{ct-eq:Upsilon}.\\

Let pose $X_s := e^{-\Gamma s}$ and $Y_s := \int_{t}^{s} e^{\Gamma u} \Sigma \dr B_u^{\cZ}$, for $s\in[t, t+h]$.
Then
\begin{equation*}
    \int_{t}^{t+h} e^{-\Gamma s} \left(\int_{t}^{s} e^{\Gamma u} \Sigma \dr B_u^{\cZ}\right) \dr s = -\Gamma^{-1}\int_{t}^{t+h} \dr X_s \cdot Y_s.
\end{equation*}
By integration by parts, as $[X, Y]_u = 0$ for all $u \geq 0$, we have
\begin{equation*}
    \begin{split}
        \int_{t}^{t+h} \dr X_s \cdot Y_s &= X_{t+h} \cdot Y_{t+h} - X_{t} \cdot Y_{t} - \int_{t}^{t+h} X_s \cdot \dr Y_s\\
        &= e^{-\Gamma (t+h)} \int_{t}^{t+h} e^{\Gamma u} \Sigma \dr B_u^{\cZ} - \int_{t}^{t+h} \Sigma \dr B_s^{\cZ}\\
        &= \int_{t}^{t+h} \left(e^{-\Gamma (t+h -s)} - \Ir_I \right) \Sigma \dr B_s^{\cZ}.
    \end{split}
\end{equation*}
Finally
\begin{equation*}
    \cA_{t+h} = \cA_{t} + \mu h + \varsigma\Upsilon_{h}\cZ_t - \varsigma\Gamma^{-1} \int_{t}^{t+h} \left(e^{-\Gamma (t+h -s)} - \Ir_I \right) \Sigma \dr B_s^{\cZ}.
\end{equation*}
The conclusion follows.

\section{Figures}

\begin{figure}[!ht]
    \centering
    \begin{subfigure}[b]{0.49\textwidth}
        \includegraphics[width=0.95\textwidth]{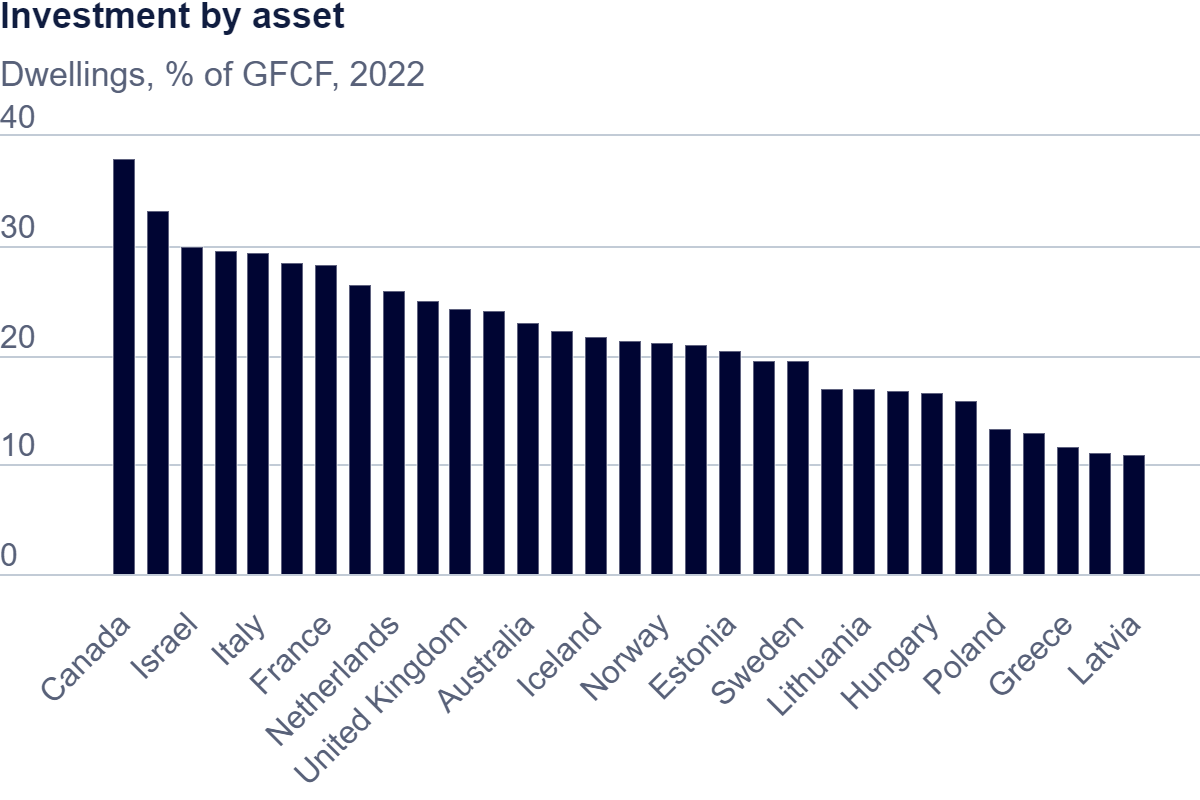}
    \caption{Share (\%) of Dwelling in Investment (\cite{oecd2024realestate})}
    \label{ct-fig:housing_share}
    \end{subfigure}
    \begin{subfigure}[b]{0.49\textwidth}
        \includegraphics[width=0.95\textwidth]{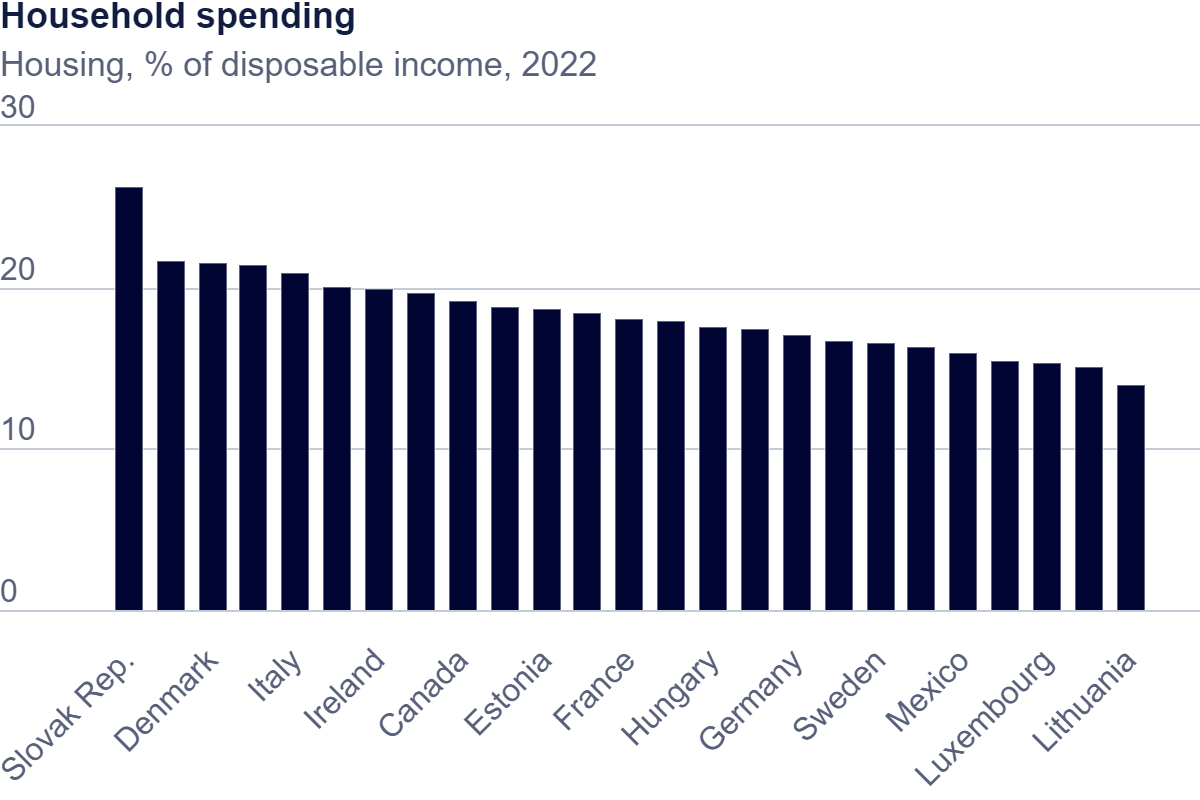}
    \caption{Share (\%) of disposable income (\cite{oecd2024household})}
    \label{ct-fig:housing_household}
    \end{subfigure}
    \caption{OECD in 2022}
    \label{fig:housing_ocde}
\end{figure}

\begin{figure}[!ht]
    \centering
    \begin{subfigure}[b]{0.49\textwidth}
        \includegraphics[width=\textwidth]{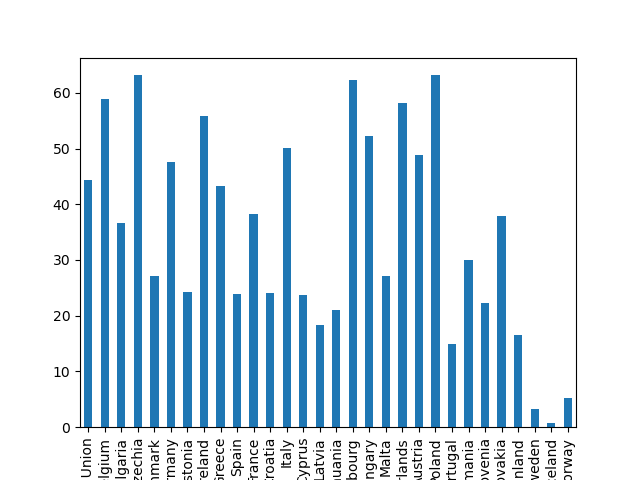}
        \caption{total households emissions}
        \label{ct-fig:part_GHG_emissions_on_households}
    \end{subfigure}
    \hfill
    \begin{subfigure}[b]{0.49\textwidth}
        \includegraphics[width=\textwidth]{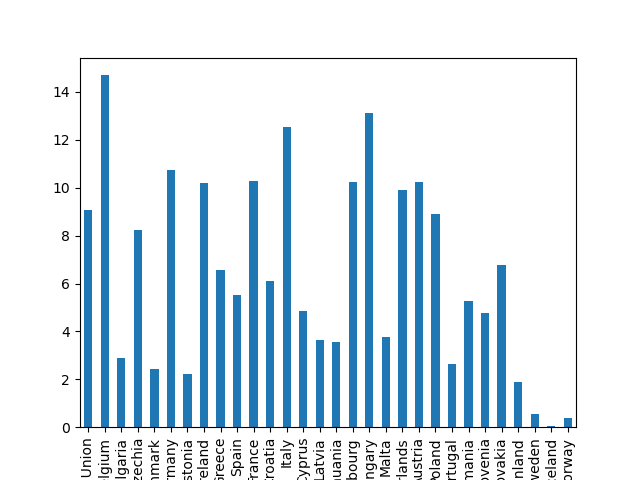}
        \caption{total economy emissions}
        \label{ct-fig:part_GHG_emissions_on_total}
    \end{subfigure}
    \caption{Part in \% of households GHG emissions on heating and cooking}
    \label{ct-fig:part_GHG_emissions}
\end{figure}

\begin{figure}[!ht]
    \centering
    \begin{subfigure}[b]{0.49\textwidth}
        \includegraphics[width=0.95\textwidth]{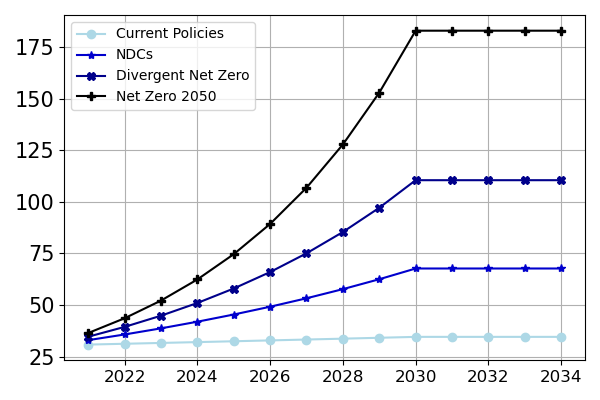}
        \caption{Carbon price}
        \label{fig:carbon_price_per_scenario}
    \end{subfigure}
    \begin{subfigure}[b]{0.49\textwidth}
        \includegraphics[width=0.95\textwidth]{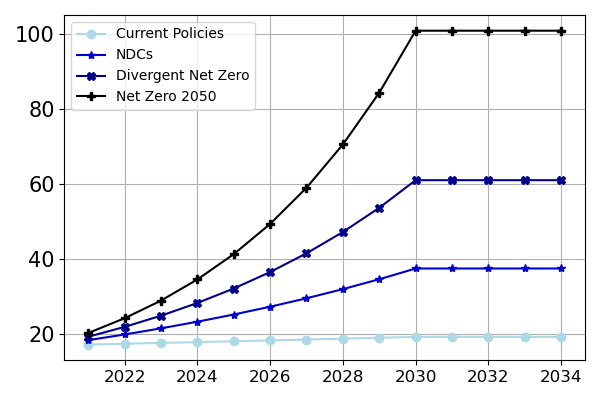}
        \caption{Energy price}
        \label{fig:Energy_price_per_scenario}
    \end{subfigure}
    \caption{per scenario and per year}
    \label{fig:cp}
\end{figure}

\end{document}